\documentclass[journal]{IEEEtran}
\usepackage{tikz}
\usetikzlibrary{positioning}

\usepackage{amsmath}
\usepackage{amssymb}
\usepackage{amsthm}
\usepackage{slashbox,pict2e}
\usepackage{xcolor}
\usepackage{float}
\usepackage{stfloats}
\usepackage{comment}
\usepackage{graphicx}
\usepackage{lineno,hyperref}
\modulolinenumbers[5]
\usepackage{epstopdf}
\usepackage{graphicx,cite,multirow,rotating}
\usepackage{setspace}
\usepackage{pgfplots} 
\usepackage{tikz}
\usepackage{enumitem}   
\usepackage{lipsum}

\usepackage[ruled,vlined,linesnumbered]{algorithm2e}
\usetikzlibrary{shapes,arrows}
\usepackage{varwidth}
\usetikzlibrary{shapes.geometric, arrows}
\usetikzlibrary{positioning,arrows}

\usepackage{amsfonts}
\usepackage{yfonts}
\usepackage{psfrag}
\usepackage{pifont}
\usepackage{amsfonts}
\usepackage{bbm}
\usepackage{dsfont}

\usepackage{color}
\usepackage{amsmath,stackrel,dsfont}
\usepackage{amssymb,hyperref,stmaryrd}
\usepackage{tabularx}
\usepackage{booktabs}
\usepackage{pifont}
\modulolinenumbers[5]
\usepackage{amsmath}
\usepackage{epstopdf}
\usepackage{graphicx,cite,multirow,rotating}
\usepackage{setspace}
\usepackage{pgfplots} 
\usepackage{tikz}

\usetikzlibrary{shapes,arrows}
\usepackage{varwidth}
\usetikzlibrary{shapes.geometric, arrows}
\usetikzlibrary{positioning,arrows}
\usepackage{amssymb}
\usepackage{amsfonts}
\usepackage{yfonts}
\usepackage{psfrag}
\usepackage{pifont}
\usepackage{amsfonts}
\usepackage{bbm}
\usepackage{dsfont}

\usepackage{color}
\usepackage{amsmath,stackrel,dsfont}
\usepackage{amssymb,hyperref,stmaryrd}

\usepackage{pifont}
\newcommand{\RN}[1]{%
	\textup{\uppercase\expandafter{\romannumeral#1}}%
}

\ifCLASSOPTIONcompsoc
\usepackage[caption=false, font=normalsize, labelfont=sf, textfont=sf]{subfig}
\else
\usepackage[caption=false, font=footnotesize]{subfig}
\fi

\newtheorem{Proposition}{Proposition}

\begin{document}

\setlength{\textfloatsep}{8pt}
\setlength{\abovecaptionskip}{2pt}
\setlength{\belowcaptionskip}{0pt}

\title{Massive MIMO-OFDM ISAC for Sparse ISAR Imaging: Joint Power and Subcarrier Allocation}

\author{Hamid~Reza~Hashempour, Yanjiao Li, \textit{Member, IEEE}, Jie Zhang, \textit{Senior Member, IEEE},\\		
		Hyundong Shin, \textit{Fellow, IEEE}, and 
Hien~Quoc~Ngo, \textit{Fellow, IEEE}
	\thanks{

H.~R.~Hashempour, J.~Zhang, and H.~Q.~Ngo are with the Centre for Wireless Innovation (CWI), Queen’s University Belfast, BT3 9DT Belfast, U.K. (e-mail:\{h.hashempoor, jie.zhang, hien.ngo\}@qub.ac.uk). H. Q. Ngo is also with the Department of Electronic Engineering, Kyung Hee University, Yongin-si, Gyeonggi-do 17104, Republic of Korea.
	}
    \thanks{Y.~Li is with the Institute of Engineering Technology, University of Science and Technology Beijing, 100083 Beijing, China (e-mail: yanjiaoli@ustb.edu.cn).}
    \thanks{H.~Shin is with the Department of Electronics and Information Convergence Engineering, Kyung Hee University, Yongin-si, Gyeonggi-do
17104, Republic of Korea (e-mail: hshin@khu.ac.kr).}
    }

\markboth{}%
{Shell \MakeLowercase{\textit{et al.}}: Bare Demo of IEEEtran.cls for IEEE Journals}


\maketitle


\begin{abstract}
This paper investigates a massive multiple-input multiple-output (mMIMO) orthogonal frequency-division multiplexing (OFDM) framework for integrated sensing and communication (ISAC) with inverse synthetic aperture radar (ISAR) imaging, supporting applications such as the Internet of Things (IoT). A dual-function architecture combines communication precoding and dedicated sensing beamforming to enable simultaneous downlink communication and ISAR imaging. Due to intermittent pilot transmission and sparse sensing-subcarrier activation, the received echoes provide incomplete measurements, resulting in a sparse-aperture ISAR reconstruction problem. To address this issue, an adaptive reweighted two-dimensional alternating direction method of multipliers (ADMM) algorithm is developed for high-resolution image recovery from sparse observations. A joint resource-allocation framework is also proposed to optimize communication-subcarrier assignment, sensing-subcarrier selection, and transmit power allocation subject to communication quality-of-service and sensing constraints. Exploiting channel hardening, analytical full-band sensing benchmarks based solely on statistical channel state information (CSI) are derived for maximum-ratio (MR) and zero-forcing (ZF) precoding, while a soft actor-critic (SAC)-based method is developed for sparse-sensing resource allocation. Numerical results show that the proposed adaptive ADMM algorithm improves sparse ISAR reconstruction over conventional methods. The SAC-based design also achieves substantial gains in sum spectral efficiency over the full-band benchmarks while satisfying communication and sensing constraints, thereby revealing the tradeoff between ISAR reconstruction accuracy and communication spectral efficiency.
\end{abstract}

\begin{IEEEkeywords}
Integrated sensing and communications (ISAC), inverse synthetic aperture radar (ISAR), massive MIMO, joint waveform design; beamforming, deep reinforcement learning.
\end{IEEEkeywords}

\section{Introduction}\label{intro}

\IEEEPARstart{I}{ntegrated} sensing and communication (ISAC) has emerged as a key enabling technology for sixth-generation (6G) wireless networks, enabling simultaneous communication and environmental sensing through the shared utilization of spectrum, hardware, and signal-processing resources \cite{Liu2022ISAC,Giordani2020,Sturm2011}. By integrating sensing and communication functionalities into a common platform, ISAC facilitates emerging applications such as autonomous transportation, digital twins, extended reality, and the Internet of Things (IoT), while improving spectral and energy efficiency \cite{Zhang2021Overview,Liu2023SeventyYears,Wu2024JointCommunicationsSensing}.

Massive multiple-input multiple-output (mMIMO) is widely recognized as a cornerstone technology for future wireless systems due to its large spatial degrees of freedom, high beamforming gains, and excellent spectral efficiency (SE) performance \cite{Bjornson2020Prospective,Marzetta2016,Lee2022EnergyEfficientMassiveMIMO}. The abundant antenna resources available in mMIMO systems make them particularly attractive for ISAC deployments, where communication and sensing functionalities can be simultaneously supported using a common transceiver architecture \cite{Wei2023ISACSurvey,Liao2024mMIMOISAC}. Consequently, mMIMO-ISAC has attracted significant research interest in beamforming design, power allocation, waveform optimization, and sensing-performance enhancement\cite{Du2025DLBasedISAC}.

Most existing ISAC research has focused on estimating target parameters such as range, velocity, and angle, which are sufficient for localization and tracking but provide limited information about the physical structure of extended targets \cite{Zhang2021Overview,Liu2022ISAC}. In contrast, inverse synthetic aperture radar (ISAR) enables high-resolution two-dimensional imaging by exploiting the relative motion between the sensing platform and the target, thereby facilitating target recognition and classification based on geometric and scattering characteristics \cite{Hashempour2017CSISAR,Hashempour2018,Hashempour2017DSP}. Owing to these advantages, ISAR has recently attracted increasing interest within ISAC systems.

Orthogonal frequency-division multiplexing (OFDM), a widely adopted waveform for ISAC, provides a flexible framework for joint communication and sensing \cite{Sturm2011,Niu2025InterferenceManagement,Dai2026MIMOOFDMISAC}. Recent studies have extended OFDM-based ISAC to ISAR imaging, including IEEE 802.11ad-based imaging frameworks and MUSIC-assisted OFDM-ISAR sensing approaches \cite{Han2022ISAR,Zhang2025ISARMUSIC}. Meanwhile, sparse-aperture and sparse-reconstruction techniques have demonstrated considerable potential for improving ISAR imaging efficiency in standalone radar systems \cite{Zhang2012SparseISAR,Hashempour2020}. However, sparse ISAR imaging remains largely unexplored in OFDM-based ISAC systems. Existing OFDM-ISAR approaches typically rely on full-band sensing and fixed sensing resources, resulting in limited flexibility and reduced SE\cite{Hashempour2018,Han2022ISAR,Zhang2025ISARMUSIC}.

Meanwhile, extensive efforts have been devoted to resource allocation and beamforming design for mMIMO-ISAC systems. Existing studies have investigated transmit beamforming, power allocation, communication-rate maximization, beampattern optimization, and sensing-performance enhancement under various sensing metrics, including the signal-to-clutter-plus-noise ratio (SCNR), Cramér--Rao lower bound (CRLB), and mainlobe-to-average-sidelobe ratio (MASR) \cite{Liu2018MUMIMO,Liu2020JointBeamforming,Qi2022HybridBeamformingISAC,Liao2024mMIMOISAC,Nguyen2025MassiveMIMOISAC}. In particular, resource-allocation and beamforming designs for mMIMO-ISAC systems have been developed based on statistical channel state information (CSI), communication quality-of-service (QoS) requirements, and sensing-performance constraints \cite{Liao2024mMIMOISAC,Nguyen2025MassiveMIMOISAC}. Although these works provide valuable insights into sensing-communication coexistence, they generally focus on point-target sensing and do not explicitly account for ISAR image reconstruction quality. Furthermore, the interaction between sparse sensing resource allocation and ISAR imaging performance remains largely unexplored.

Consequently, three important research gaps remain. First, sparse ISAR imaging has rarely been investigated within mMIMO-OFDM ISAC frameworks. Second, existing sparse ISAR reconstruction methods generally ignore the coupling between sensing-resource allocation and communication performance. Third, the tradeoff between ISAR reconstruction accuracy and communication SE has not been systematically characterized.

Motivated by these observations, this paper develops a unified mMIMO-OFDM ISAC framework that jointly addresses sparse ISAR imaging and sensing-communication resource allocation. The main contributions of this work are summarized as follows:
\begin{itemize} \item We propose a mMIMO-OFDM ISAC framework for simultaneous downlink communication and ISAR imaging. A practical dual-function transmission architecture is adopted, where the transmitted signal is formed as a weighted combination of a communication precoder and a dedicated sensing beamformer. \item We formulate a sparse-aperture ISAR sensing model arising from intermittent pilot transmission and sparse sensing-subcarrier activation. To recover high-resolution ISAR images from incomplete measurements, an adaptive reweighted two-dimensional alternating direction method of multipliers (ADMM) reconstruction algorithm is developed. \item Leveraging the channel hardening property of mMIMO systems and statistical CSI, we derive tractable expressions for the communication SINR and sensing MASR. Based on these expressions, analytical benchmark resource-allocation schemes are developed for both maximum-ratio (MR) and zero-forcing (ZF) precoding.
\item We develop a deep reinforcement learning (DRL)-based framework for joint
resource allocation. Specifically, a soft actor--critic (SAC) algorithm is
employed to jointly optimize communication-subcarrier assignment,
sensing-subcarrier selection, and transmit power allocation subject to
communication QoS and sensing constraints.
\item Numerical results demonstrate that the proposed adaptive ADMM scheme achieves high-quality sparse ISAR reconstruction while the SAC-based sparse-sensing strategy significantly improves the achievable sum SE compared with conventional full-band designs. The results further quantify the tradeoff between ISAR reconstruction accuracy and communication SE. 
\end{itemize}

The remainder of this paper is organized as follows. Section~\ref{sec:sys} presents the proposed mMIMO-OFDM ISAC system model. Section~\ref{sec:isar} develops the sparse ISAR imaging formulation and adaptive ADMM reconstruction algorithm. Section~\ref{sec:metrics} derives the communication and sensing performance metrics, formulates the joint resource-allocation problem, and develops an analytical full-band benchmark. Section~\ref{sec:DRL} presents the SAC-based framework for joint power control and subcarrier allocation. Numerical results are provided in Section~\ref{sec:sim}, followed by conclusions in Section~\ref{sec:conc}.

\textit{Notation}: Bold lowercase and uppercase letters denote vectors and matrices, respectively. The operators $(\cdot)^T$, $(\cdot)^H$, and $(\cdot)^*$ denote transpose, Hermitian transpose, and complex conjugate, respectively. The sets of real and complex matrices are denoted by $\mathbb{R}^{M\times N}$ and $\mathbb{C}^{M\times N}$. The matrix $\mathbf{I}_N$ denotes the $N\times N$ identity matrix. The operators $\operatorname{diag}(\cdot)$, $\operatorname{tr}(\cdot)$, and $\mathbb{E}(\cdot)$ denote the diagonalization, trace, and expectation operators, respectively. The Euclidean, Frobenius, and $\ell_1$ norms are denoted by $\Vert\cdot\Vert_2$, $\Vert\cdot\Vert_F$, and $\Vert\cdot\Vert_1$, respectively. The symbol $\otimes$ denotes the Kronecker product, and $\mathcal{CN}(\boldsymbol{\mu},\mathbf{C})$ denotes a circularly symmetric complex Gaussian distribution with mean $\boldsymbol{\mu}$ and covariance matrix $\mathbf{C}$.


\section{ISAC System Model}\label{sec:sys}

We consider a massive MIMO-enabled ISAC downlink system operating in the
sub-6~GHz band, as illustrated in Fig.~\ref{fig_system}. The system adopts a
co-located monostatic sensing architecture, in which the transmit and receive
arrays are located at the same BS. More precisely, the BS is equipped with $M_t$ transmit
antennas and simultaneously serves $K$ single-antenna communication users
(CUs) while illuminating an extended target scene for ISAR imaging. The
reflected echoes are collected by a co-located sensing receiver equipped with
$M_r$ receive antennas.

\begin{figure} 
    \centering
    \includegraphics[width=1\linewidth]{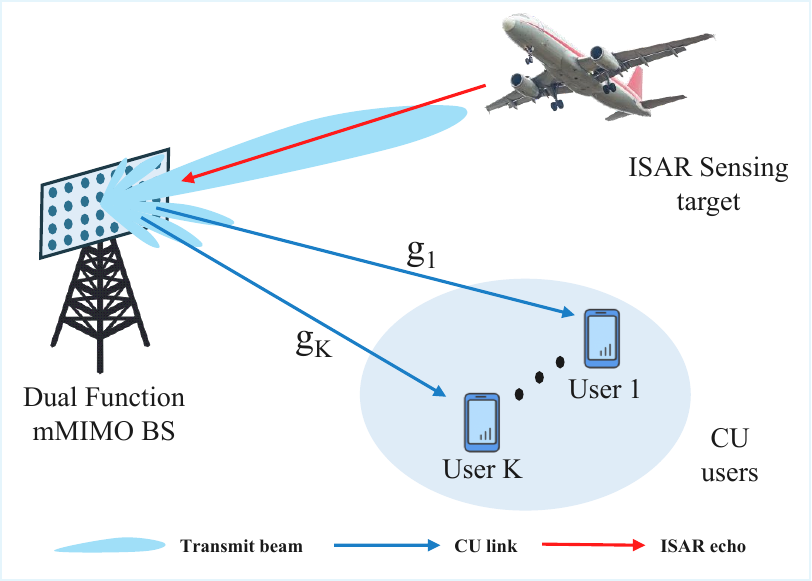}
    \caption{System model for OFDM-based ISAR-ISAC.}
    \label{fig_system}	
\end{figure}

\subsection{Signal Model}
A CP-OFDM waveform is adopted for both communication and sensing. This choice
is motivated by the fact that CP-OFDM is the baseline downlink waveform in
5G New Radio (NR) and has also been considered in recent 3GPP RAN WG1
discussions on 6G radio waveform design~\cite{R1_2506595_6G_waveform,
R1_2506702_RAN1_122_report}. Hence, the same OFDM time-frequency grid can be
exploited for communication data transmission and sensing echo acquisition,
leading to a unified ISAC signal model.
The system bandwidth is given by
\begin{equation}
B_{\mathrm{sys}} = N_c \Delta f,
\end{equation}
where $N_c$ is the number of subcarriers and $\Delta f$ is the subcarrier
spacing.  For high-resolution ISAR imaging, the dedicated sensing waveform is
designed to exploit the full system bandwidth whenever full-band sensing is
available \cite{Hashempour2018}. For example, in the considered setup, $B_{\mathrm{sys}}=100$~MHz.
Full-band sensing improves the range resolution and avoids ambiguities caused
by fragmented spectral occupation. In contrast, the CUs are assigned
subcarriers according to their traffic demands and QoS
requirements. Thus, a CU does not necessarily occupy the entire bandwidth.

Let $\rho_{k,n}\in\{0,1\}$ denote the subcarrier-allocation indicator for CU
$k$ on subcarrier $n$, where $\rho_{k,n}=1$ indicates that subcarrier $n$ is
assigned to CU $k$, and $\rho_{k,n}=0$ otherwise. Let
$s_k[\ell,n]\in\mathbb{C}$ denote the communication data symbol intended for CU
$k$ on OFDM symbol $\ell$ and subcarrier $n$, satisfying
$\mathbb{E}\{|s_k[\ell,n]|^2\}=1$. Throughout this paper, $n\in\{0,\ldots,N_c-1\}$ indexes the OFDM subcarriers,
$\ell\in\{0,\ldots,N_s-1\}$ indexes the OFDM symbols and $N_s$ is the number of OFDM symbols. The communication symbols are stacked as
\begin{equation}
\mathbf{s}_{\ell,n}
=
\big[s_1[\ell,n],\ldots,s_K[\ell,n]\big]^T
\in \mathbb{C}^{K\times 1},
\end{equation}
and the corresponding subcarrier-allocation matrix is defined as
\begin{equation}
\mathbf{D}_n
=
\operatorname{diag}\{\rho_{1,n},\ldots,\rho_{K,n}\}.
\end{equation}

In addition to the communication data stream, a dedicated sensing stream with
symbol $s_{\ell,n}^{\mathrm{sen}}\in\mathbb{C}$ is transmitted on subcarrier
$n$,
where $\mathbb{E}\{|s_{\ell,n}^{\mathrm{sen}}|^2\}=1$. The sensing symbols are
assumed to be known at the sensing receiver and statistically independent of the
communication symbols. To model both full-band ISAR sensing and sparse sensing measurements for
compressed sensing, we introduce the sensing subcarrier-selection indicator
$\rho_n^{\mathrm{sen}}\in\{0,1\}$, where $\rho_n^{\mathrm{sen}} =1$ implies subcarrier $n$ is allocated to sensing, and $\rho_n^{\mathrm{sen}} =0$ otherwise. The active
sensing-subcarrier set is therefore defined as
\begin{equation}
\Omega_{\mathrm{sen}}
\triangleq
\left\{ n:\rho_n^{\mathrm{sen}}=1,\; n=0,\ldots,N_c-1 \right\}.
\end{equation}
The full-band sensing case corresponds to \(\rho_n^{\mathrm{sen}}=1\) for all $n$. When \(|\Omega_{\mathrm{sen}}|<N_c\), only a subset of subcarriers is used, yielding undersampled measurements. Since the ISAR scene is sparse in the range–Doppler domain \cite{Hashempour2020,Zhang2012SparseISAR}, it can be reconstructed using compressed sensing.

Let $\mathbf{t}_{\mathrm{c},k}[n]\in\mathbb{C}^{M_t\times 1}$ denote the
communication beamforming vector for CU $k$ on subcarrier $n$. The
communication precoding matrix on subcarrier $n$ is defined as
\begin{equation}
\mathbf{T}_{\mathrm{c}}[n]
\triangleq
\big[
\mathbf{t}_{\mathrm{c},1}[n],
\ldots,
\mathbf{t}_{\mathrm{c},K}[n]
\big]
\in\mathbb{C}^{M_t\times K}.
\end{equation}
Furthermore, let \(\mathbf w\in\mathbb C^{M_t\times 1}\) denote the unit-norm sensing beamforming vector, i.e., \( \Vert\mathbf w\Vert_2^2=1\), and let \(q_n\geq 0\) denote the sensing power allocated to subcarrier \(n\). Under the narrowband-array assumption, the same beamforming vector \(\mathbf w\) is used across all subcarriers, since the array response is approximately frequency-independent. Nevertheless, the sensing power may vary across subcarriers through \(\{q_n\}\).

For sensing, we consider a line-of-sight propagation scenario between the BS
array and the extended target. The BS antenna array is modeled as a uniform
planar array (UPA) deployed on the $y$--$z$ plane with half-wavelength antenna
spacing. Let $M_y$ and $M_z$ denote the numbers of antenna elements along the
$y$- and $z$-axes, respectively, such that $M_t=M_yM_z$. Let
$\phi\in(-\pi/2,\pi/2)$ and $\theta\in(0,\pi)$ denote the azimuth and
elevation angles of departure with respect to the $x$- and $z$-axes,
respectively. The UPA steering vector is given by
\begin{equation}\label{trans_steer}
\mathbf{a}_{\mathrm{t}}(\phi,\theta)
=
\mathbf{a}_y(\phi,\theta)\otimes \mathbf{a}_z(\theta)
\in\mathbb{C}^{M_t\times 1},
\end{equation}
where $\otimes$ denotes the Kronecker product. The steering vectors along the
$y$- and $z$-axes are respectively expressed as \cite{Liao2024mMIMOISAC}
\begin{align}
&\mathbf{a}_y(\phi,\theta)\nonumber\\
&=
\frac{1}{\sqrt{M_y}}
\left[
1,
e^{-j\pi \sin(\phi)\sin(\theta)},
\ldots,
e^{-j\pi (M_y-1)\sin(\phi)\sin(\theta)}
\right]^T,\nonumber\\
&\mathbf{a}_z(\theta)
=
\frac{1}{\sqrt{M_z}}
\left[
1,
e^{-j\pi \cos(\theta)},
\ldots,
e^{-j\pi (M_z-1)\cos(\theta)}
\right]^T.\nonumber
\end{align}
The sensing beamformer $\mathbf{w}$ is designed to focus energy toward the
target direction while suppressing sidelobe radiation. The normalized transmit
beampattern is given by $
\left|
\mathbf{a}_{\mathrm{t}}^{H}(\phi,\theta)\mathbf{w}
\right|^2$ \cite{Liao2024mMIMOISAC}.
Let $\alpha_{k,n}\geq 0$ denote the communication power allocated to CU $k$ on
subcarrier $n$, and define
\begin{equation}
\mathbf{D}_{\boldsymbol{\alpha}_n}
\triangleq
\operatorname{diag}\{\alpha_{1,n},\ldots,\alpha_{K,n}\}.
\end{equation}
Then, the communication precoding matrix is written as
\begin{equation}
\mathbf{T}_{\mathrm{c}}[n]
=
\mathbf{U}_{\mathrm{c}}[n]\mathbf{D}_{\boldsymbol{\alpha}_n}^{1/2},
\label{eq:comm_precoder_power_normalized}
\end{equation}
where
\(
\mathbf{U}_{\mathrm{c}}[n]
=
[\mathbf{u}_{1,n},\ldots,\mathbf{u}_{K,n}]
\in\mathbb{C}^{M_t\times K}
\)
contains the normalized precoding directions. The normalization is chosen such that
$\mathbb{E}\{\|\mathbf{u}_{k,n}\|_2^2\}=1$ for $k=1,\ldots,K$, where the
expectation is with respect to the small-scale fading when the precoder depends
on the channel estimates. Hence, the $k$-th communication
beamformer can be written as $\mathbf{t}_{\mathrm{c},k}[n]
=
\sqrt{\alpha_{k,n}}\mathbf{u}_{k,n}$,
and satisfies $\mathbb{E}\!\left\{
\left\|
\mathbf{t}_{\mathrm{c},k}[n]
\right\|_2^2
\right\}
=
\alpha_{k,n}$.

The $M_t$-antenna frequency-domain transmit vector on OFDM symbol $\ell$ and
subcarrier $n$ is then expressed as
\begin{equation}
\mathbf{x}_{\ell,n}
=
\mathbf{T}_{\mathrm{c}}[n]\mathbf{D}_n\mathbf{s}_{\ell,n}
+
\sqrt{q_n\rho_n^{\mathrm{sen}}}\,
\mathbf{w}s_{\ell,n}^{\mathrm{sen}} .
\label{eq:x_ln}
\end{equation}
The first term in \eqref{eq:x_ln} represents the multiuser communication
signal.
The second term represents the dedicated sensing waveform transmitted over the
selected sensing subcarriers. 
The average transmit power on
subcarrier $n$ is given by
\begin{align}
P_n
&=
\mathbb{E}\!\left\{
\left\|
\mathbf{x}_{\ell,n}
\right\|_2^2
\right\}
\nonumber\\
&=
\mathbb{E}\!\left\{
\operatorname{tr}
\left(
\mathbf{T}_{\mathrm{c}}[n]\mathbf{D}_n
\mathbf{T}_{\mathrm{c}}^H[n]
\right)
\right\}
+
q_n\rho_n^{\mathrm{sen}}
\nonumber\\
&=
\sum_{k=1}^{K}
\rho_{k,n}\alpha_{k,n}
+
q_n\rho_n^{\mathrm{sen}} .
\label{eq:subcarrier_power}
\end{align}
Accordingly, the total BS transmit-power constraint is written as
\begin{equation}
\sum_{n=0}^{N_c-1}
\left(
\sum_{k=1}^{K}
\rho_{k,n}\alpha_{k,n}
+
q_n\rho_n^{\mathrm{sen}}
\right)
\leq
P_{\max}.
\label{eq:total_power_constraint}
\end{equation}
where $P_{\max}$ denotes the maximum transmit power at the BS.
Finally, the set of CUs scheduled on subcarrier $n$ is defined as
\begin{equation}
\mathcal{U}_n
\triangleq
\{k:\rho_{k,n}=1\}.
\end{equation}
Unlike conventional OFDMA systems that impose strict per-subcarrier user
exclusivity, the considered mMIMO system allows multiple CUs to share
the same subcarrier through spatial multiplexing. Therefore, a CU may occupy
multiple subcarriers, and a subcarrier may be simultaneously assigned to
multiple CUs.

\subsection{Channel Model and MR/ZF Precoding}\label{subsec:csi}

Let $\mathbf{g}_{k}[n]\in\mathbb{C}^{M_t\times 1}$ denote the downlink channel
vector from the BS to CU $k$ on subcarrier $n$. We adopt a Rayleigh fading
model given by
\begin{equation}
    \mathbf{g}_{k}[n]
    =
    \sqrt{\xi_k}\mathbf{h}_{k}[n],
    \label{eq:dl_ch}
\end{equation}
where $\mathbf{h}_{k}[n]\sim\mathcal{CN}(\mathbf{0},\mathbf{I}_{M_t})$ denotes the
small-scale fading vector, and $\xi_k$ is the corresponding large-scale fading coefficient. Since large-scale fading is mainly determined by path loss and
shadowing, it is assumed to be independent of the subcarrier index.

In this paper, we adopt a time-division duplexing (TDD) protocol in which
uplink pilot training is conducted at the beginning of each coherence interval,
followed by downlink communication and ISAR sensing~\cite{Liao2024mMIMOISAC}.
Each CU transmits an orthogonal pilot sequence of length $\tau_p$, with
$\tau_p\geq K$, using pilot transmit power $\rho_p$. The uplink noise variance
at the BS is denoted by $\sigma_{\mathrm{ul}}^2$. Based on the received pilot signals and channel reciprocity, the BS obtains
the minimum mean square error (MMSE) estimate $\widehat{\mathbf{g}}_{k}[n]$ of $\mathbf{g}_{k}[n]$ on each
subcarrier. Therefore, the channel can
be decomposed as
\begin{equation}
\mathbf{g}_{k}[n]
=
\widehat{\mathbf{g}}_{k}[n]
+
\mathbf{e}_k[n],
\label{eq:ch_decomp}
\end{equation}
where $\mathbf{e}_k[n]$ denotes the channel estimation error.
From the property of the MMSE estimator and Rayleigh fading channels, $\widehat{\mathbf{g}}_{k}[n]$ and
$\mathbf{e}_k[n]$ are independent. Moreover,
\begin{align}
\widehat{\mathbf{g}}_{k}[n]
&=
\sqrt{\eta_k}\bar{\mathbf{g}}_{k}[n],
\label{eq:g_hat}\\
\bar{\mathbf{g}}_{k}[n]
&\sim
\mathcal{CN}(\mathbf{0},\mathbf{I}_{M_t}),
\label{eq:gbar}\\
\mathbf{e}_k[n]
&\sim
\mathcal{CN}(\mathbf{0},\varepsilon_k\mathbf{I}_{M_t}),
\label{eq:eps}
\end{align}
where $\eta_k$ denotes the variance of the channel estimate for CU $k$, and
$\xi_k=\eta_k+\varepsilon_k$.
The MMSE estimation-error variance and the variance of the channel estimate are
respectively given by
\begin{align}
\varepsilon_k
&=
\frac{\sigma_{\mathrm{ul}}^2\xi_k}
{\sigma_{\mathrm{ul}}^2+\tau_p\rho_p\xi_k},
\label{eq:eps_def}\\
\eta_k
&=
\xi_k-\varepsilon_k
=
\frac{\tau_p\rho_p\xi_k^2}
{\sigma_{\mathrm{ul}}^2+\tau_p\rho_p\xi_k}.
\label{eq:eta_def}
\end{align}
For later use, we define the normalized channel-estimation error as $\delta_k
\triangleq
\frac{\varepsilon_k}{\xi_k}
\in[0,1)$,
which yields $\frac{\eta_k}{\xi_k}
=
1-\delta_k$, and
$\frac{\varepsilon_k}{\eta_k}
=
\frac{\delta_k}{1-\delta_k}$.
In the perfect-CSI case, $\delta_k=0$, and hence
$\eta_k=\xi_k$ and $\varepsilon_k=0$.

We consider two widely used linear precoding schemes in mMIMO, namely
MR and ZF precoding. For MR precoding, the
communication beamforming vector is chosen as
\begin{equation}
\mathbf{t}_{\mathrm{c},k}^{\mathrm{MR}}[n]
=
\sqrt{\frac{\alpha_{k,n}}{M_t}}\,
\bar{\mathbf{g}}_k[n],
\qquad k=1,\ldots,K.
\label{eq:mr_precoder_def}
\end{equation}
Let $L_n\triangleq|\mathcal{U}_n|$ denote the number of CUs scheduled on
subcarrier $n$, and assume $M_t>L_n$. The ZF precoder on subcarrier $n$ is
constructed based on the estimated channels of the scheduled CUs as
\begin{equation}
\mathbf{T}_{\mathrm{c},\mathcal{U}_n}^{\mathrm{ZF}}[n]
=
\sqrt{M_t-L_n}\,
\bar{\mathbf{G}}_n
\left(\bar{\mathbf{G}}_n^H\bar{\mathbf{G}}_n\right)^{-1}
\mathbf{D}_{\boldsymbol{\alpha}_{\mathcal{U}_n}}^{1/2},
\label{eq:zf_precoder_def}
\end{equation}
where $\bar{\mathbf{G}}_n
\triangleq
[\bar{\mathbf{g}}_i[n]]_{i\in\mathcal{U}_n}
\in\mathbb{C}^{M_t\times L_n}$,
and $\mathbf{D}_{\boldsymbol{\alpha}_{\mathcal{U}_n}}
\triangleq
\operatorname{diag}\{\alpha_{i,n}:i\in\mathcal{U}_n\}$.
The corresponding columns are embedded into the full $M_t\times K$ precoding
matrix $\mathbf{T}_{\mathrm{c}}[n]$, while unscheduled users are excluded by
$\mathbf{D}_n$.

\section{OFDM-Based ISAR Imaging and Sparse Reconstruction}\label{sec:isar}
\subsection{OFDM-Based ISAR Imaging}

To connect the ISAC transmit model in \eqref{eq:x_ln} with the OFDM-based
ISAR imaging process, the continuous-time baseband signal radiated by the BS
antenna array is written as
\begin{equation}
\mathbf{x}(t)
=
\frac{1}{\sqrt{N_c}}
\sum_{\ell=0}^{N_s-1}
\sum_{n=0}^{N_c-1}
\mathbf{x}_{\ell,n}
e^{j2\pi n\Delta f (t-\ell T_t-T_g)}
u\!\left(\frac{t-\ell T_t}{T_t}\right),
\label{eq:tx_signal_mimo}
\end{equation}
where $\mathbf{x}_{\ell,n}\in\mathbb{C}^{M_t\times 1}$ is defined in
\eqref{eq:x_ln}. Moreover, $T_u=1/\Delta f$ is the useful OFDM-symbol
duration, $T_g$ is the cyclic-prefix or guard-interval duration, and
$T_t=T_u+T_g$ is the total OFDM-symbol duration. The rectangular pulse
$u(\tau)$ is defined as
\begin{equation}
u(\tau)
=
\begin{cases}
1, & 0\leq \tau < 1,\\
0, & \text{otherwise}.
\end{cases}
\end{equation}
The phase reference $t-\ell T_t-T_g$ aligns the subcarrier phase with the
useful OFDM interval after CP removal.

We consider an extended target composed of $Q$ dominant scattering centers.
The $i$th scattering center has local coordinates $(x_i,y_i)$ with respect
to a target reference point, where $x_i$ and $y_i$ denote the range and
cross-range offsets, respectively. We assume that sufficient isolation and
self-interference cancellation are available between the transmit and receive
chains. Hence, the residual self-interference from the transmit array to the
sensing receiver is neglected.

Let
$\mathbf{a}_{\mathrm{t}}(\phi_i,\theta_i)\in\mathbb{C}^{M_t\times 1}$ defined in \eqref{trans_steer} and
$\mathbf{a}_{\mathrm{r}}(\phi_i,\theta_i)\in\mathbb{C}^{M_r\times 1}$
denote the transmit and receive steering vectors associated with the
$i$-th scattering center, respectively. 
Under the far-field and narrowband-array assumptions, the received sensing echo
vector at the sensing receiver is modeled as the coherent superposition of the
echoes from all scattering centers
\begin{align}
\mathbf{y}_{\mathrm{s}}(t)
=&
\sum_{i=1}^{Q}
\beta_i
\mathbf{a}_{\mathrm{r}}(\phi_i,\theta_i)
\mathbf{a}_{\mathrm{t}}^{H}(\phi_i,\theta_i)
\mathbf{x}\big(t-\tau_i(t)\big)
e^{-j2\pi f_c\tau_i(t)}\nonumber \\ &
+
\mathbf{z}_{\mathrm{s}}(t),
\label{eq:isar_echo_vector}
\end{align}
where $\mathbf{z}_{\mathrm{s}}(t)\in\mathbb{C}^{M_r\times 1}$
denotes the receiver noise, $f_c$ is the carrier frequency, and $\tau_i(t)
=\frac{2R_i(t)}{c}$
is the round-trip delay of the $i$-th scattering center. Here, $c$ is the
speed of light and $R_i(t)$ denotes the instantaneous range between the radar
and the $i$-th scattering center.

The transmit sensing beamformer $\mathbf{w}$ is fixed across all sensing
subcarriers and designed offline using a conventional beampattern-synthesis
method to steer the mainlobe toward
$(\phi_{\mathrm{t}},\theta_{\mathrm{t}})$ while suppressing sidelobe
radiation \cite{VanTrees2002OptimumArray,Zhang2018PatternSynthesis}. The beamformer is normalized such that
$\|\mathbf{w}\|_2^2=1$. The transmit and receive steering vectors are also
unit norm, i.e.,
$\|\mathbf{a}_{\mathrm{t}}(\phi,\theta)\|_2^2=1$ and
$\|\mathbf{a}_{\mathrm{r}}(\phi,\theta)\|_2^2=1$.
The sensing receiver employs a conventional matched beamformer steered toward
the target reference direction. Thus, the scalar combined sensing echo is
given by
\begin{equation}
\widetilde{y}_{\mathrm{s}}(t)
=
\mathbf{a}_{\mathrm{r}}^{H}(\phi_{\mathrm{t}},\theta_{\mathrm{t}})\mathbf{y}_{\mathrm{s}}(t).
\label{eq:rx_combined_echo}
\end{equation}
Substituting \eqref{eq:isar_echo_vector} into \eqref{eq:rx_combined_echo}
yields
\begin{equation}
\widetilde{y}_{\mathrm{s}}(t)
=
\sum_{i=1}^{Q}
\bar{\beta}_i
\mathbf{a}_{\mathrm{t}}^{H}(\phi_i,\theta_i)
\mathbf{x}\big(t-\tau_i(t)\big)
e^{-j2\pi f_c\tau_i(t)}
+
\widetilde{z}_{\mathrm{s}}(t),
\label{eq:isar_echo_time}
\end{equation}
where $\bar{\beta}_i
\triangleq
\beta_i
\mathbf{a}_{\mathrm{r}}^{H}(\phi_{\mathrm{t}},\theta_{\mathrm{t}})
\mathbf{a}_{\mathrm{r}}(\phi_i,\theta_i)$
is the effective reflectivity of the $i$-th scattering center after receive
beamforming, and
$\widetilde{z}_{\mathrm{s}}(t)=\mathbf{a}_{\mathrm{r}}^{H}(\phi_{\mathrm{t}},\theta_{\mathrm{t}})\mathbf{z}_{\mathrm{s}}(t)$
is the combined receiver noise.
If $\mathbf{z}_{\mathrm{s}}(t)\sim
\mathcal{CN}(\mathbf{0},\sigma_{\mathrm{rad}}^2\mathbf{I}_{M_r})$, then
$\widetilde{z}_{\mathrm{s}}(t)\sim
\mathcal{CN}(0,\sigma_{\mathrm{rad}}^2)$.
Substituting the sensing component of \eqref{eq:x_ln} into
\eqref{eq:isar_echo_time}, while collecting the communication-induced term as
residual distortion, the receive-combined sensing echo can be expressed as
\begin{align}
\widetilde{y}_{\mathrm{s}}(t)
&=
\frac{1}{\sqrt{N_c}}
\sum_{i=1}^{Q}
\sum_{\ell=0}^{N_s-1}
\sum_{n=0}^{N_c-1}
\beta_i^{\ell,n}
e^{j2\pi n\Delta f (t-\tau_i(t)-\ell T_t-T_g)}
\nonumber\\
&\quad \times
u\!\left(\frac{t-\tau_i(t)-\ell T_t}{T_t}\right)
e^{-j2\pi f_c\tau_i(t)}
+
\widetilde{d}_{\mathrm{c}}(t)
+
\widetilde{z}_{\mathrm{s}}(t),
\label{eq:isar_echo_time2}
\end{align}
where $\beta_i^{\ell,n}
\triangleq
\bar{\beta}_i
\sqrt{q_n\rho_n^{\mathrm{sen}}}\,
s_{\ell,n}^{\mathrm{sen}}
\mathbf{a}_{\mathrm{t}}^{H}(\phi_i,\theta_i)\mathbf{w}$ 
is the equivalent sensing coefficient, which includes the effective
scatterer reflectivity after receive beamforming, sensing symbol, sensing
power, sensing-subcarrier selection, and transmit beamforming gain. The term
$\widetilde{d}_{\mathrm{c}}(t)$ denotes the residual communication-induced
distortion after receive combining and sensing extraction. With a highly
directive sensing beam and sufficiently low sidelobe level, this term can
be significantly suppressed and is neglected in the baseline ISAR imaging
model.

The target motion is decomposed into translational and rotational components.
Let $R_t(t)$ denote the range of the target reference point due to translational motion and let
$\Phi(t)=\phi_0+\Omega t$ denote the target rotational motion, where $\phi_0$ is the initial aspect angle and $\Omega$ is the rotational velocity of target.  Under the far-field
ISAR approximation, the range of the $i$-th scattering center is given by \cite{Hashempour2020,Hashempour2018}
\begin{equation}
R_i(t)
\simeq
R_t(t)
+
x_i\cos\Phi(t)
-
y_i\sin\Phi(t).
\label{eq:isar_scatterer_range}
\end{equation}
After translational motion compensation (TMC) \cite{Hashempour2017DSP}, the common range variation
$R_t(t)$ is removed. Moreover, for a small rotation angle within the coherent
processing interval, the scatterer range at slow-time index $\ell$ can be
approximated as \cite{Hashempour2017DSP,Hashempour2020}
\begin{equation}
R_i[\ell]
\simeq
R_0
+
x_i
-
y_i\Omega \ell T_t,
\label{eq:motion_final}
\end{equation}
where $R_0$ denotes the reference range after motion compensation. Hence, the
cross-range information is encoded in the slow-time phase variation induced by
the target rotation.

To obtain the frequency-domain OFDM sensing model, we assume timing synchronization with respect to the target reference range,
and that the residual delay spread of the dominant scattering centers is within
the CP duration. Hence, the CP absorbs the target-induced delay spread and
inter-symbol interference is avoided.
Furthermore, after TMC and migration-through-resolution-cells (MTRC) compensation~\cite{Hashempour2020,Xing2004MTRC}, the
time-varying envelope term
$u((t-\tau_i(t)-\ell T_t)/T_t)$ can be approximated by a fixed OFDM block
window over the $\ell$-th symbol. In addition, the target motion within one
OFDM symbol is assumed to be sufficiently small such that
$\tau_i(t)\approx\tau_i[\ell]=2R_i[\ell]/c$ during the demodulation interval.
Therefore, after CP removal, OFDM demodulation, and sensing-symbol extraction \cite{Dai2026MIMOOFDMISAC},
the frequency-domain sensing observation on OFDM symbol $\ell$ and subcarrier
$n$ is obtained as
\begin{equation}
Y_{\ell,n}^{\mathrm{sen}}
\approx
\sum_{i=1}^{Q}
\beta_i^{\ell,n}
e^{-j\frac{4\pi}{c}(f_c+n\Delta f)R_i[\ell]}
+
Z_{\ell,n},
\label{eq:isar_echo_freq}
\end{equation}
where $Z_{\ell,n}$ is the frequency-domain noise. In
\eqref{eq:isar_echo_freq}, the residual communication-induced distortion is
assumed to be negligible after sensing extraction and sidelobe suppression. The
exponential term captures the range-dependent phase variation across
subcarriers and the slow-time phase history required for ISAR imaging.

For active sensing subcarriers, i.e., $n\in\Omega_{\mathrm{sen}}$, the known
sensing symbol and power factor can be removed from
\eqref{eq:isar_echo_freq}. Since $\rho_n^{\mathrm{sen}}=1$ for
$n\in\Omega_{\mathrm{sen}}$, we define the normalized sensing measurement as $\widetilde{H}_{\ell,n}
=
\frac{Y_{\ell,n}^{\mathrm{sen}}}
{\sqrt{q_n}s_{\ell,n}^{\mathrm{sen}}}$ for $n\in\Omega_{\mathrm{sen}}$.
Using the definition of $\beta_i^{\ell,n}$, and neglecting residual communication-induced
distortion and noise yields
\begin{equation}
\widetilde{H}_{\ell,n}
=
\sum_{i=1}^{Q}
\widetilde{\beta}_i
e^{-j\frac{4\pi}{c}(f_c+n\Delta f)R_i[\ell]},
\qquad n\in\Omega_{\mathrm{sen}},
\label{eq:channel_final}
\end{equation}
where $\widetilde{\beta}_i
\triangleq
\bar{\beta}_i
\mathbf{a}_{\mathrm{t}}^{H}(\phi_i,\theta_i)\mathbf{w}$
denotes the beamformed reflectivity of the $i$-th scattering center.
Equivalently, for $n\in\Omega_{\mathrm{sen}}$,
$\widetilde{\beta}_i=\beta_i^{\ell,n}/(\sqrt{q_n}s_{\ell,n}^{\mathrm{sen}})$.
In the full-band sensing case, where $\Omega_{\mathrm{sen}}=\{0,\ldots,N_c-1\}$,
range compression is achieved by applying an inverse discrete Fourier transform (IDFT) across all subcarriers:
\begin{equation}
s_{\ell,q}
=
\frac{1}{N_c}
\sum_{n=0}^{N_c-1}
\widetilde{H}_{\ell,n}
e^{j2\pi nq/N_c},
\label{eq:range_compression_final}
\end{equation}
where $q=0,\ldots,N_c-1$ indexes the range bins. The resulting matrix
$\mathbf{S}=[s_{\ell,q}]\in\mathbb{C}^{N_s\times N_c}$ represents the
range-compressed ISAR data, where rows correspond to slow-time samples and
columns correspond to range bins.
The ISAR image is then obtained by applying a DFT along the slow-time
dimension
\begin{equation}
I(p,q)
=
\sum_{\ell=0}^{N_s-1}
s_{\ell,q}
e^{-j2\pi p\ell/N_s},
\quad
p=0,\ldots,N_s-1 .
\label{eq:isar_image_final}
\end{equation}
Here, $p$ denotes the Doppler, or cross-range, index. This is the conventional 2D-FFT approach for ISAR imaging with complete data \cite{Hashempour2017DSP,Hashempour2018,Hashempour2020}.

In the sparse sensing case, only the measurements
$\{\widetilde{H}_{\ell,n}:n\in\Omega_{\mathrm{sen}}\}$ are available.
Therefore, the IDFT in \eqref{eq:range_compression_final} cannot be directly
applied to the complete frequency-domain data. Instead, the missing
frequency-domain samples, or equivalently the sparse ISAR image, are recovered
using the compressed-sensing reconstruction method described in the next
subsection.
\vspace{-2mm}

\subsection{Sparse Aperture and Sparse Subcarrier Sensing in ISAC Systems}

In practical TDD-based ISAC systems, not all OFDM symbols are available for
sensing. Let $T_c$ denote the coherence
time, measured in seconds. Since each OFDM symbol has duration $T_t$,
the number of OFDM symbols contained in one coherence interval is
\begin{equation}
\tau_c
=
\left\lfloor
\frac{T_c}{T_t}
\right\rfloor .
\end{equation}
Following the TDD protocol, $\tau_p$ OFDM symbols are reserved for uplink pilot
training, while the remaining $\tau_c-\tau_p$ OFDM symbols are used for
downlink data transmission and sensing as shown in Fig.~\ref{fig:tdd_sparse_aperture}. Therefore, only a subset of the
nominal slow-time samples is available for ISAR processing. The fraction of each coherence interval available for downlink communication
and sensing is
\begin{equation}\label{eq:dl_fraction}
\zeta_{\mathrm{dl}}
=
1-\frac{\tau_p}{\tau_c}.
\end{equation}
Consider $N_{\mathrm{blk}}$ consecutive coherence blocks. The nominal
slow-time aperture then contains
\begin{equation}
N_s
=
N_{\mathrm{blk}}\tau_c
\end{equation}
OFDM-symbol positions. For the $b$-th coherence block,
$b=0,\ldots,N_{\mathrm{blk}}-1$, define the block index set as $\mathcal{T}_b
=
\{b\tau_c,\ldots,(b+1)\tau_c-1\}$.
Let $\mathcal{P}_b\subset \mathcal{T}_b$ denote the set of uplink pilot-symbol
indices, with $|\mathcal{P}_b|=\tau_p$. The available downlink sensing-symbol
indices in block $b$ are therefore given by
\begin{equation}
\mathcal{S}_b
=
\mathcal{T}_b\setminus \mathcal{P}_b,
\qquad
|\mathcal{S}_b|=\tau_c-\tau_p .
\end{equation}
The overall set of available slow-time samples is
\begin{equation}
\mathcal{S}
=
\bigcup_{b=0}^{N_{\mathrm{blk}}-1}
\mathcal{S}_b,
\end{equation}
and the number of available slow-time samples is
\begin{equation}
\widetilde{N}_s
=
|\mathcal{S}|
=
N_{\mathrm{blk}}(\tau_c-\tau_p)
<
N_s .
\end{equation}

\begin{figure*}[!t]
\centering
\begin{tikzpicture}[
    font=\footnotesize,
    ul/.style={draw, fill=orange!35, minimum height=0.7cm, text width=1.15cm, align=center, inner sep=0pt},
    dl/.style={draw, fill=blue!30, minimum height=0.7cm, text width=2.25cm, align=center, inner sep=0pt},
    miss/.style={draw, fill=gray!35, minimum height=0.7cm, text width=1.15cm, align=center, inner sep=0pt},
    valid/.style={draw, fill=white, minimum height=0.7cm, text width=2.25cm, align=center, inner sep=0pt},
    >=latex
]

\node[ul] (ul1) at (0,0) {UL pilots};
\node[dl, right=0cm of ul1] (dl1) {DL data+sensing};
\node[ul, right=0cm of dl1] (ul2) {UL pilots};
\node[dl, right=0cm of ul2] (dl2) {DL data+sensing};
\node[ul, right=0cm of dl2] (ul3) {UL pilots};

\node[right=0.55cm of ul3] (dots1) {$\cdots$};

\node[dl, right=0.55cm of dots1] (dlB1) {DL data+sensing};
\node[ul, right=0cm of dlB1] (ulB) {UL pilots};
\node[dl, right=0cm of ulB] (dlB2) {DL data+sensing};

\node[miss, below=0.85cm of ul1] (m1) {Missing};
\node[valid, below=0.85cm of dl1] (v1) {Valid};
\node[miss, below=0.85cm of ul2] (m2) {Missing};
\node[valid, below=0.85cm of dl2] (v2) {Valid};
\node[miss, below=0.85cm of ul3] (m3) {Missing};

\node[below=1.15cm of dots1] (dots2) {$\cdots$};

\node[valid, below=0.85cm of dlB1] (vB1) {Valid};
\node[miss, below=0.85cm of ulB] (mB) {Missing};
\node[valid, below=0.85cm of dlB2] (vB2) {Valid};

\draw[<->, thick]
([yshift=0.58cm]ul1.north west)
--
([yshift=0.58cm]dlB2.north east)
node[midway, above=0.05cm] {TDD communication protocol};

\draw[<->, thick]
([xshift=-0.0cm,yshift=0.58cm]ul1.north west)
--
([xshift=0.0cm,yshift=0.58cm]dlB2.north east);

\draw[<->, thick]
([xshift=-0.00cm,yshift=-0.30cm]m1.south west)
--
([xshift=0.00cm,yshift=-0.30cm]vB2.south east)
node[midway, below=0.15cm] {Slow time / ISAR aperture};

\draw[<->, dashed]
([yshift=0.20cm]ul1.north west)
--
([yshift=0.20cm]dl1.north east)
node[midway, above=0.02cm] {$\tau_c$};

\draw[<->]
([yshift=-0.25cm]ul1.south west)
--
([yshift=-0.25cm]ul1.south east)
node[midway, below=0.05cm] {$\tau_p$};

\draw[<->]
([yshift=-0.25cm]dl1.south west)
--
([yshift=-0.25cm]dl1.south east)
node[midway, below=0.05cm] {$\tau_c-\tau_p$};

\foreach \x in {ul1,dl1,ul2,dl2,ul3,dlB1,ulB,dlB2}{
    \draw[dashed, gray!70]
    (\x.south west) -- ++(0,-1.05cm);
    \draw[dashed, gray!70]
    (\x.south east) -- ++(0,-1.05cm);
}

\end{tikzpicture}
\caption{Sparse slow-time aperture in TDD mMIMO ISAC. UL pilot symbols occupy $\tau_p$ out of $\tau_c$ OFDM symbols per coherence interval and create missing sensing samples, while the remaining DL symbols provide valid ISAR observations.}
\label{fig:tdd_sparse_aperture}
\end{figure*}

From an ISAR perspective, the missing uplink-pilot intervals create a sparse
slow-time aperture, resulting in incomplete Doppler-domain information.
Moreover, in the considered ISAC waveform, sensing may  be performed only
over a subset of subcarriers, i.e., $n\in\Omega_{\mathrm{sen}}$, to
enable sparse waveform-domain sensing. Thus, the ISAR observation may be
incomplete in both the slow-time and frequency dimensions.

Let $\Omega_{\mathrm{sen}}
=
\{n_1,\ldots,n_{\widetilde{N}_c}\}
\subseteq
\{0,\ldots,N_c-1\}$
denote the active sensing-subcarrier set, where
$\widetilde{N}_c=|\Omega_{\mathrm{sen}}|$. To model the sparse slow-time aperture, we define the aperture sampling matrix
$\boldsymbol{\Psi}_a\in\{0,1\}^{\widetilde{N}_s\times N_s}$ as
\begin{equation}
\boldsymbol{\Psi}_a
=
\begin{bmatrix}
\left(\mathbf{i}^{(N_s)}_{\ell_1+1}\right)^T\\
\left(\mathbf{i}^{(N_s)}_{\ell_2+1}\right)^T\\
\vdots\\
\left(\mathbf{i}^{(N_s)}_{\ell_{\widetilde{N}_s}+1}\right)^T
\end{bmatrix},
\qquad
\{\ell_1,\ldots,\ell_{\widetilde{N}_s}\}=\mathcal{S},
\end{equation}
where $\mathbf{i}^{(N_s)}_{r}$ denotes the $r$-th column of the
$N_s\times N_s$ identity matrix.
Similarly, to model sparse sensing subcarriers, we define the subcarrier
sampling matrix
$\boldsymbol{\Psi}_r\in\{0,1\}^{\widetilde{N}_c\times N_c}$ as
\begin{equation}
\boldsymbol{\Psi}_r
=
\begin{bmatrix}
\left(\mathbf{i}^{(N_c)}_{n_1+1}\right)^T\\
\left(\mathbf{i}^{(N_c)}_{n_2+1}\right)^T\\
\vdots\\
\left(\mathbf{i}^{(N_c)}_{n_{\widetilde{N}_c}+1}\right)^T
\end{bmatrix},
\qquad
\{n_1,\ldots,n_{\widetilde{N}_c}\}=\Omega_{\mathrm{sen}},
\end{equation}
where $\mathbf{i}^{(N_c)}_{r}$ denotes the $r$-th column of the
$N_c\times N_c$ identity matrix.
In the proposed joint resource-allocation
framework, the active sensing-subcarrier set $\Omega_{\mathrm{sen}}$ is
determined by the SAC policy described in Section~\ref{sec:DRL}.
Let
$\widetilde{\mathbf{H}}\in\mathbb{C}^{N_s\times N_c}$ denote the hypothetical
complete normalized frequency-domain ISAR measurement matrix whose
$(\ell,n)$-th entry follows the model in \eqref{eq:channel_final}. The complete frequency-domain ISAR
data can be modeled as
\begin{equation}
\widetilde{\mathbf{H}}
=
\mathbf{F}_a
\mathbf{G}
\mathbf{F}_r^{T}
+
\mathbf{Z},
\label{eq:full_2d_isar_model}
\end{equation}
where $\mathbf{G}\in\mathbb{C}^{N_a\times N_r}$ denotes the discretized
two-dimensional ISAR reflectivity image on the cross-range--range grid.
Matrices
$\mathbf{F}_a\in\mathbb{C}^{N_s\times N_a}$ and
$\mathbf{F}_r\in\mathbb{C}^{N_c\times N_r}$ are the slow-time and
frequency-domain Fourier dictionaries, respectively, and $\mathbf{Z}$ is
additive noise. The Fourier dictionaries are defined as
\begin{subequations}
    \allowdisplaybreaks
    \begin{align}
[\mathbf{F}_a]_{\ell,m}
&=
\frac{1}{\sqrt{N_a}}e^{-j2\pi \ell m/N_a},
\nonumber \\&
\ell=0,\ldots,N_s-1,\quad m=0,\ldots,N_a-1,
\\
[\mathbf{F}_r]_{n,q}
&=
\frac{1}{\sqrt{N_r}}e^{-j2\pi n q/N_r},
\nonumber \\ &
n=0,\ldots,N_c-1,\quad q=0,\ldots,N_r-1.
\end{align}
\end{subequations}
Due to the missing slow-time samples and the sparse sensing subcarriers, the
available compressed ISAR observation is
\begin{equation}
\mathbf{S}_{\mathrm{CS}}
=
\boldsymbol{\Psi}_a
\widetilde{\mathbf{H}}
\boldsymbol{\Psi}_r^{T}
=
\boldsymbol{\Psi}_a
\mathbf{F}_a
\mathbf{G}
\mathbf{F}_r^{T}
\boldsymbol{\Psi}_r^{T}
+
\mathbf{Z}_{\mathrm{CS}},
\label{eq:2d_sparse_isar_model}
\end{equation}
where
$\mathbf{S}_{\mathrm{CS}}\in\mathbb{C}^{\widetilde{N}_s\times
\widetilde{N}_c}$ is the observed sparse ISAR data matrix and
$\mathbf{Z}_{\mathrm{CS}}\in\mathbb{C}^{\widetilde{N}_s\times
\widetilde{N}_c}$ is the corresponding noise matrix.
The dimensions $N_a$ and $N_r$ denote the numbers of discretized cross-range
and range grid points, respectively. For super-resolution ISAR imaging, the
image grid can be finer than the number of available measurements, i.e.,
$N_a>N_s$ and/or $N_r>N_c$. Therefore, recovering $\mathbf{G}$ from
\eqref{eq:2d_sparse_isar_model} is an underdetermined inverse problem, which
motivates the sparse reconstruction method developed in the next subsection.

\vspace{-3mm}

\subsection{Sparse ISAR Image Reconstruction From Incomplete ISAC Measurements}
\label{subsec:2d_sparse_reconstruction}

Based on the sparse aperture and sparse subcarrier sensing model developed in
the previous subsection, the ISAR image is reconstructed by estimating the
reflectivity matrix $\mathbf{G}$ from \eqref{eq:2d_sparse_isar_model}. Since an
ISAR image is typically dominated by a limited number of strong scattering
centers, $\mathbf{G}$ can be regarded as sparse in the discretized
range--Doppler domain. Accordingly, we formulate the sparse reconstruction
problem as
\begin{equation}
\widehat{\mathbf{G}}
=
\arg\min_{\mathbf{G}}
\frac{1}{2}
\left\|
\mathbf{A}\mathbf{G}\mathbf{C}
-
\mathbf{S}_{\mathrm{CS}}
\right\|_F^2
+
\lambda \|\mathbf{G}\|_1 ,
\label{eq:compact_l1_problem}
\end{equation}
where $\lambda>0$ is the sparsity-promoting regularization parameter,
$\|\mathbf{G}\|_1$ denotes the element-wise $\ell_1$ norm,  $\mathbf{A} \triangleq \boldsymbol{\Psi}_a\mathbf{F}_a$, and $\mathbf{C} \triangleq \mathbf{F}_r^{T}\boldsymbol{\Psi}_r^{T}$.
A conventional 2D-ADMM solution to \eqref{eq:compact_l1_problem} introduces an
auxiliary variable $\mathbf{B}$ and imposes $\mathbf{G}=\mathbf{B}$, leading to
\begin{equation}
\min_{\mathbf{G},\mathbf{B}}
\frac{1}{2}
\left\|
\mathbf{A}\mathbf{G}\mathbf{C}
-
\mathbf{S}_{\mathrm{CS}}
\right\|_F^2
+
\lambda \|\mathbf{B}\|_1,
\quad
\mathrm{s.t.}\quad
\mathbf{G}=\mathbf{B}.
\label{eq:admm_constrained_problem}
\end{equation}
Using the scaled dual variable $\mathbf{V}$ and ADMM penalty parameter
$\delta>0$, the data-fitting residual at iteration $j$ is
\begin{equation}
\mathbf{R}_{s}^{j}
=
\boldsymbol{\Psi}_a\mathbf{F}_a
(\mathbf{B}^{j}-\mathbf{V}^{j})
\mathbf{F}_r^{T}\boldsymbol{\Psi}_r^{T}
-
\mathbf{S}_{\mathrm{CS}} .
\label{eq:admm_residual}
\end{equation}
The reflectivity matrix is then updated as \cite{Hashempour2020}
\begin{equation}
\mathbf{G}^{j+1}
=
(\mathbf{B}^{j}-\mathbf{V}^{j})
-
\frac{1}{1+\delta}
\mathbf{F}_a^{H}\boldsymbol{\Psi}_a^{T}
\mathbf{R}_{s}^{j}
\boldsymbol{\Psi}_r
\mathbf{F}_r^{*}.
\label{eq:g_admm_update_simplified}
\end{equation}
The auxiliary variable and scaled dual variable are updated as
\begin{align}
\mathbf{B}^{j+1}
&=
\mathcal{S}_{\lambda/\delta}
\left(
\mathbf{G}^{j+1}
+
\mathbf{V}^{j}
\right),
\label{eq:b_admm_update}
\\
\mathbf{V}^{j+1}
&=
\mathbf{V}^{j}
+
\mathbf{G}^{j+1}
-
\mathbf{B}^{j+1},
\label{eq:v_admm_update}
\end{align}
where the complex soft-thresholding operator is defined as
\begin{equation}
\mathcal{S}_{\gamma}(x)
=
\max\left(1-\frac{\gamma}{|x|},0\right)x,
\end{equation}
with $\mathcal{S}_{\gamma}(0)=0$.

Although the above conventional 2D-ADMM provides an efficient reconstruction
baseline, its image quality depends strongly on the manually selected
regularization parameter $\lambda$. A large $\lambda$ may suppress weak
scatterers, whereas a small $\lambda$ may retain sidelobe-like artifacts and
background noise. Moreover, a fixed ADMM penalty parameter $\delta$ may lead to
slow or unbalanced convergence. To address these limitations, we propose an
adaptive lobe-suppressing reweighted 2D-ADMM method. The key idea is to estimate
the sparsity threshold from the preliminary ISAR image itself and to update the
ADMM penalty parameter according to the primal and dual residuals.

First, a preliminary matched-filter image is formed as
\begin{equation}
\widetilde{\mathbf{G}}
=
\mathbf{A}^{H}
\mathbf{S}_{\mathrm{CS}}
\mathbf{C}^{H}
=
\mathbf{F}_a^{H}\boldsymbol{\Psi}_a^{T}
\mathbf{S}_{\mathrm{CS}}
\boldsymbol{\Psi}_r
\mathbf{F}_r^{*}.
\label{eq:dirty_isar_image}
\end{equation}
Let $a_1 \geq a_2 \geq \cdots \geq a_{N_G}$ denote the sorted magnitudes of the
vectorized coefficients of $\widetilde{\mathbf{G}}$, where $N_G$ is the number
of image pixels. Since dominant scattering centers are usually separated from
the sidelobe/noise background in magnitude, we estimate the dominant support
size as
\begin{equation}
\widehat{Q}
=
\arg\max_{q\in\mathcal{Q}}
\log
\left(
\frac{a_q+\epsilon_g}{a_{q+1}+\epsilon_g}
\right),
\label{eq:gap_support_estimation}
\end{equation}
where $\mathcal{Q}$ is the admissible support-search set and $\epsilon_g>0$
avoids numerical instability. The corresponding gap-based threshold is
\begin{equation}
\lambda_{\mathrm{gap}}
=
\frac{1}{2}
\left(
a_{\widehat{Q}}+a_{\widehat{Q}+1}
\right).
\label{eq:lambda_gap}
\end{equation}
To avoid an overly small threshold in noisy cases, we also compute
\begin{equation}
\lambda_{\mathrm{noise}}
=
c_{\mathrm{noise}}\widehat{\sigma}
\sqrt{2\log N_G},
\label{eq:lambda_noise}
\end{equation}
where $c_{\mathrm{noise}}>0$ and $\widehat{\sigma}$ is a robust estimate of the background
level obtained from the small-magnitude coefficients of
$\widetilde{\mathbf{G}}$. The target regularization parameter is then selected
as
\begin{equation}
\lambda_{\mathrm{tar}}
=
\max
\left\{
\lambda_{\mathrm{noise}},
c_{\mathrm{gap}}\lambda_{\mathrm{gap}}
\right\},
\label{eq:lambda_target}
\end{equation}
where $0<c_{\mathrm{gap}}\leq 1$. For stability, a continuation strategy is used:
\begin{equation}
\lambda_j
=
\max
\left\{
\lambda_{\mathrm{tar}},
\lambda_0\eta^j
\right\},
\qquad 0<\eta<1.
\label{eq:lambda_continuation}
\end{equation}
To further enhance sparsity, the fixed $\ell_1$ penalty in
\eqref{eq:compact_l1_problem} is replaced by a reweighted $\ell_1$ term:
\begin{equation}
\widehat{\mathbf{G}}
=
\arg\min_{\mathbf{G}}
\frac{1}{2}
\left\|
\mathbf{A}\mathbf{G}\mathbf{C}
-
\mathbf{S}_{\mathrm{CS}}
\right\|_F^2
+
\lambda_j
\left\|
\mathbf{W}^{j}\odot\mathbf{G}
\right\|_1 ,
\label{eq:weighted_l1_problem}
\end{equation}
where $\mathbf{W}^{j}$ is the reweighting matrix and $\odot$ denotes the
Hadamard product. The weights are updated as
\begin{equation}
\left[\mathbf{W}^{j+1}\right]_{u,v}
=
\frac{1}{\left|[\mathbf{B}^{j+1}]_{u,v}\right|+\epsilon_w},
\label{eq:weight_update}
\end{equation}
followed by normalization with respect to their median value. This reweighting
penalizes weak coefficients more strongly than dominant scatterers and therefore
improves lobe and background suppression.

The proposed method preserves the computational structure of conventional
2D-ADMM. Specifically, the $\mathbf{G}$- and $\mathbf{V}$-updates follow
\eqref{eq:g_admm_update_simplified} and \eqref{eq:v_admm_update}, while the
auxiliary-variable update becomes
\begin{equation}
\mathbf{B}^{j+1}
=
\mathcal{S}_{(\lambda_j/\delta_j)\mathbf{W}^{j}}
\left(
\mathbf{G}^{j+1}
+
\mathbf{V}^{j}
\right).
\label{eq:weighted_b_admm_update}
\end{equation}
The ADMM penalty parameter is updated using the primal and dual residuals
\begin{equation}
\begin{aligned}
\mathbf{R}_{\mathrm{pri}}^{j+1}
&=
\mathbf{G}^{j+1}-\mathbf{B}^{j+1},\\
\mathbf{R}_{\mathrm{dual}}^{j+1}
&=
\delta_j
\left(
\mathbf{B}^{j+1}-\mathbf{B}^{j}
\right),
\end{aligned}
\label{eq:primal_dual_residuals}
\end{equation}
according to
\begin{equation}
\delta_{j+1}
=
\begin{cases}
\tau_{\delta}\delta_j, &
\|\mathbf{R}_{\mathrm{pri}}^{j+1}\|_F
>
\mu
\|\mathbf{R}_{\mathrm{dual}}^{j+1}\|_F,\\
\delta_j/\tau_{\delta}, &
\|\mathbf{R}_{\mathrm{dual}}^{j+1}\|_F
>
\mu
\|\mathbf{R}_{\mathrm{pri}}^{j+1}\|_F,\\
\delta_j, & \text{otherwise},
\end{cases}
\label{eq:delta_update}
\end{equation}
where $\mu>1$ and $\tau_{\delta}>1$. When $\delta_j$ is changed, the scaled
dual variable is rescaled accordingly to preserve the unscaled dual variable.
The conventional 2D-ADMM method is recovered as a special case by setting
$\mathbf{W}^{j}=\mathbf{1}$, $\lambda_j=\lambda$, and $\delta_j=\delta$ for all
iterations. The proposed reconstruction procedure is summarized in
Algorithm~\ref{alg:adaptive-admm-clean}.

\begin{algorithm}[t]
\caption{Sparse ISAR Image Reconstruction via Adaptive Reweighted 2D-ADMM}
\label{alg:adaptive-admm-clean}

\KwIn{$\mathbf{S}_{\mathrm{CS}}$, $\boldsymbol{\Psi}_a$, $\mathbf{F}_a$, $\mathbf{F}_r$, $\boldsymbol{\Psi}_r$,
$\delta_0$, $\lambda_0$, $c_{\mathrm{noise}}$, $c_{\mathrm{gap}}$, $\eta$, $\mu$, and $\tau_{\delta}$}

\KwOut{$\widehat{\mathbf{G}}$}

Form $\widetilde{\mathbf{G}}$ using \eqref{eq:dirty_isar_image}\;

Estimate $\widehat{Q}$ using \eqref{eq:gap_support_estimation}\;

Compute $\lambda_{\mathrm{gap}}$, $\lambda_{\mathrm{noise}}$, and $\lambda_{\mathrm{tar}}$ using
\eqref{eq:lambda_gap}--\eqref{eq:lambda_target}\;

Initialize
$j=0$,
$\mathbf{G}^{0}=\mathbf{0}$,
$\mathbf{B}^{0}=\mathbf{0}$,
$\mathbf{V}^{0}=\mathbf{0}$,
$\mathbf{W}^{0}=\mathbf{1}$,
and $\delta_j=\delta_0$\;

\While{stopping criterion is not satisfied}{
    Update $\lambda_j$ using \eqref{eq:lambda_continuation}\;

    Compute $\mathbf{R}_s^j$ using \eqref{eq:admm_residual}\;

    Update $\mathbf{G}^{j+1}$ using \eqref{eq:g_admm_update_simplified} with $\delta=\delta_j$\;

    Update $\mathbf{B}^{j+1}$ using \eqref{eq:weighted_b_admm_update}\;

    Update $\mathbf{V}^{j+1}$ using \eqref{eq:v_admm_update}\;

    Update $\delta_{j+1}$ using \eqref{eq:primal_dual_residuals} and \eqref{eq:delta_update}\;

    Update and normalize $\mathbf{W}^{j+1}$ using \eqref{eq:weight_update}\;

    $j\leftarrow j+1$\;
}

$\widehat{\mathbf{G}}=\mathbf{B}^{j}$\;

\end{algorithm}

%

\section{Performance Metrics and Problem Formulation }\label{sec:metrics}

\subsection{Communication Performance Metrics}\label{subsec:sinr}

We adopt the use-and-then-forget (UatF) bound to characterize the achievable
downlink rates under imperfect CSI\cite{Marzetta2016}. For CU $k$ on OFDM symbol $\ell$ and
subcarrier $n$, the received signal is
\begin{equation}
y_{k,\ell,n}
=
\mathbf{g}_k^{H}[n]\mathbf{x}_{\ell,n}
+
v_{k,\ell,n},
\label{eq:cu_rx_signal}
\end{equation}
where $v_{k,\ell,n}\sim\mathcal{CN}(0,\sigma_{k,n}^{2})$ is the noise. The per-subcarrier
noise variance is modeled as $\sigma_{k,n}^{2}
=
\frac{\sigma_k^{2}}{N_c}$,
where $\sigma_k^2$ denotes the total noise power over the system bandwidth.
The transmit vector $\mathbf{x}_{\ell,n}$ is given in \eqref{eq:x_ln}, and the
channel decomposition follows
\eqref{eq:ch_decomp}.

\begin{Proposition}\label{prop:mr_uatf}
For MR precoding, the UatF effective SINR of CU $k$ on subcarrier $n$ is given
by
\begin{equation}
\mathrm{SINR}^{\mathrm{eff,MR}}_{k,n}
=
\frac{
M_t\eta_k\,\rho_{k,n}\alpha_{k,n}
}{
\xi_k\sum\limits_{i=1}^{K}\rho_{i,n}\alpha_{i,n}
+
\xi_k\rho_n^{\mathrm{sen}}q_n
+
\sigma_{k,n}^2
}.
\label{eq:sinr_mr_ofdma}
\end{equation}
\end{Proposition}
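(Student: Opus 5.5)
We follow the standard use-and-then-forget (UatF) argument. The CU is assumed to know only the statistical mean of its effective channel gain. Substituting \eqref{eq:x_ln} with the MR precoder \eqref{eq:mr_precoder_def} into \eqref{eq:cu_rx_signal}, we split the received signal as
\begin{align}
y_{k,\ell,n}
=&\;
\mathbb{E}\{\mathbf{g}_k^H[n]\mathbf{t}_{\mathrm{c},k}[n]\}\rho_{k,n}s_k[\ell,n]
\nonumber\\
&+\big(\mathbf{g}_k^H[n]\mathbf{t}_{\mathrm{c},k}[n]-\mathbb{E}\{\mathbf{g}_k^H[n]\mathbf{t}_{\mathrm{c},k}[n]\}\big)\rho_{k,n}s_k[\ell,n]
\nonumber\\
&+\sum_{i\neq k}\rho_{i,n}\mathbf{g}_k^H[n]\mathbf{t}_{\mathrm{c},i}[n]s_i[\ell,n]
\nonumber\\
&+\sqrt{q_n\rho_n^{\mathrm{sen}}}\,\mathbf{g}_k^H[n]\mathbf{w}s^{\mathrm{sen}}_{\ell,n}+v_{k,\ell,n}.
\nonumber
\end{align}
The four terms are the desired signal over the mean channel, beamforming-gain uncertainty, multiuser interference, and sensing interference, followed by noise. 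The data symbols, the sensing symbol and the noise are independent and zero-mean. Hence every term after the first is uncorrelated with the first. The worst-case uncorrelated-Gaussian-noise argument then gives
\begin{equation}
\mathrm{SINR}_{k,n}=|\mathrm{DS}|^2\big/\big(\mathrm{Var}\{\mathrm{BU}\}+\mathrm{MUI}+\mathrm{SI}+\sigma_{k,n}^2\big),
\nonumber
\end{equation}
with each term an expectation over small-scale fading.

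The moments follow from \eqref{eq:ch_decomp}--\eqref{eq:eps}.

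\textbf{Desired signal.} Write $\mathbf{g}_k=\sqrt{\eta_k}\bar{\mathbf{g}}_k+\mathbf{e}_k$, with $\mathbf{e}_k$ independent and zero-mean. Then
\begin{equation}
\mathbb{E}\{\mathbf{g}_k^H\mathbf{t}_{\mathrm{c},k}\}
=\sqrt{\alpha_{k,n}\eta_k/M_t}\;\mathbb{E}\|\bar{\mathbf{g}}_k\|^2
=\sqrt{M_t\eta_k\alpha_{k,n}},
\nonumber
\end{equation}
so $|\mathrm{DS}|^2=M_t\eta_k\rho_{k,n}\alpha_{k,n}$, using $\rho_{k,n}^2=\rho_{k,n}$.

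\textbf{Gain uncertainty.} Compute
\begin{equation}
\mathbb{E}|\mathbf{g}_k^H\mathbf{t}_{\mathrm{c},k}|^2
=\frac{\alpha_{k,n}}{M_t}\Big(\eta_k\,\mathbb{E}\|\bar{\mathbf{g}}_k\|^4+\varepsilon_k\,\mathbb{E}\|\bar{\mathbf{g}}_k\|^2\Big).
\nonumber
\end{equation}
With the Gaussian fourth moment $\mathbb{E}\|\bar{\mathbf{g}}_k\|^4=M_t(M_t+1)$, this equals $\alpha_{k,n}(\eta_k(M_t+1)+\varepsilon_k)$. Subtracting the squared mean gives $\mathrm{Var}=\alpha_{k,n}(\eta_k+\varepsilon_k)=\xi_k\alpha_{k,n}$.

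\textbf{Interference.} For $i\neq k$, $\bar{\mathbf{g}}_i$ is independent of $\mathbf{g}_k$ across users, with $\mathbb{E}\{\bar{\mathbf{g}}_i\bar{\mathbf{g}}_i^H\}=\mathbf{I}$. So $\mathbb{E}|\mathbf{g}_k^H\mathbf{t}_{\mathrm{c},i}|^2=\frac{\alpha_{i,n}}{M_t}\mathbb{E}\|\mathbf{g}_k\|^2=\xi_k\alpha_{i,n}$. Similarly, $\mathbf{w}$ is deterministic with unit norm, so $\mathbb{E}|\mathbf{g}_k^H\mathbf{w}|^2=\xi_k$, and the sensing term has power $\xi_k q_n\rho_n^{\mathrm{sen}}$.

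Summing, the gain-uncertainty term $\xi_k\rho_{k,n}\alpha_{k,n}$ merges with the interference sum into $\xi_k\sum_{i=1}^K\rho_{i,n}\alpha_{i,n}$. This yields \eqref{eq:sinr_mr_ofdma}.

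The main obstacle is the self-term fourth moment, together with checking that the cross terms between $\bar{\mathbf{g}}_k$ and $\mathbf{e}_k$ vanish in $\mathbb{E}|\mathbf{g}_k^H\bar{\mathbf{g}}_k|^2$. We handle this by conditioning on $\bar{\mathbf{g}}_k$ and using the independence and zero mean of $\mathbf{e}_k$. We also need the independence of small-scale fading across users, which is implicit in the model, so that the interference terms factor as above.
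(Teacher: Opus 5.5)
Your proposal is correct and follows essentially the same route as the paper. The paper also splits the received signal into the self term $A_k$, the interference terms $B_{k,i}$ and the sensing term $C_k$, and applies the UatF bound. It then computes the same moments: $\mathbb{E}[A_k]=\rho_k\sqrt{M_t\eta_k\alpha_k}$ and $\mathbb{E}[|A_k|^2]=(M_t\eta_k+\xi_k)\rho_k\alpha_k$ (via $\mathbb{E}\|\bar{\mathbf{g}}_k\|_2^4=M_t(M_t+1)$), together with $\mathbb{E}[|B_{k,i}|^2]=\xi_k\rho_i\alpha_i$ and $\mathbb{E}[|C_k|^2]=\xi_k q\rho^{\mathrm{sen}}$. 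Your explicit mean/deviation split of the self term is only a rewriting of the paper's $\mathbb{E}[|A_k|^2]-|\mathbb{E}[A_k]|^2$ denominator term.
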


\begin{proof}
Please refer to Appendix~\ref{app:proof_mr_uatf}.
\end{proof}

\begin{Proposition}\label{prop:zf_uatf}
For ZF precoding, the UatF effective SINR of CU $k$ on subcarrier $n$ is given
by
\begin{equation}
\mathrm{SINR}^{\mathrm{eff,ZF}}_{k,n}
=
\frac{
(M_t-L_n)\eta_k\,\rho_{k,n}\alpha_{k,n}
}{
\varepsilon_k\sum\limits_{i=1}^{K}\rho_{i,n}\alpha_{i,n}
+
\xi_k\rho_n^{\mathrm{sen}}q_n
+
\sigma_{k,n}^2
}.
\label{eq:sinr_zf_ofdma}
\end{equation}
\end{Proposition}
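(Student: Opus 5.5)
We follow the same UatF template as Proposition~\ref{prop:mr_uatf}. We substitute \eqref{eq:x_ln} with the ZF precoder \eqref{eq:zf_precoder_def} into \eqref{eq:cu_rx_signal}, and write
\[
y_{k,\ell,n}=\mathbb{E}\{\mathbf{g}_k^H[n]\mathbf{t}_{\mathrm{c},k}[n]\}\rho_{k,n}s_k[\ell,n]+\tilde v_{k,\ell,n},
\]
where $\tilde v_{k,\ell,n}$ collects the following terms:
\begin{itemize}[leftmargin=*]
\item the beamforming-gain uncertainty;
\item inter-user interference from the scheduled users $i\neq k$;
\item the sensing leakage $\sqrt{q_n\rho_n^{\mathrm{sen}}}\,\mathbf{g}_k^H[n]\mathbf{w}s^{\mathrm{sen}}_{\ell,n}$;
\item the noise.
\end{itemize}
These terms are mutually uncorrelated with the desired term because the symbols are independent and zero-mean. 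The UatF SINR is then $|\mathbb{E}\{\cdot\}|^2$ divided by the sum of the variance of the gain uncertainty, the second moments of the interference and leakage terms, and $\sigma_{k,n}^2$.

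\textbf{Effective gain.} We use the decomposition \eqref{eq:ch_decomp} with $\widehat{\mathbf{g}}_k[n]=\sqrt{\eta_k}\bar{\mathbf{g}}_k[n]$. ZF gives $\bar{\mathbf{g}}_k^H[n]\bar{\mathbf{G}}_n(\bar{\mathbf{G}}_n^H\bar{\mathbf{G}}_n)^{-1}=\mathbf{i}_k^T$, so for $k\in\mathcal{U}_n$
\[
\widehat{\mathbf{g}}_k^H[n]\mathbf{t}_{\mathrm{c},k}[n]=\sqrt{\eta_k(M_t-L_n)\alpha_{k,n}}
\]
deterministically, and $\widehat{\mathbf{g}}_k^H[n]\mathbf{t}_{\mathrm{c},i}[n]=0$ for $i\neq k$. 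The error part $\mathbf{e}_k^H[n]\mathbf{t}_{\mathrm{c},i}[n]$ is zero-mean because $\mathbf{e}_k[n]$ is independent of $\bar{\mathbf{G}}_n$. Hence the numerator is $(M_t-L_n)\eta_k\rho_{k,n}\alpha_{k,n}$.

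\textbf{Interference and gain uncertainty.} Only estimation-error terms survive. For every scheduled $i$, including $i=k$,
\[
\mathbb{E}\{|\mathbf{e}_k^H[n]\mathbf{t}_{\mathrm{c},i}[n]|^2\}=\varepsilon_k\,\mathbb{E}\{\|\mathbf{t}_{\mathrm{c},i}[n]\|^2\}.
\]
We then need $\mathbb{E}\{\|\mathbf{t}_{\mathrm{c},i}[n]\|^2\}=\alpha_{i,n}$. This follows from $(M_t-L_n)\,\mathbb{E}\{[(\bar{\mathbf{G}}_n^H\bar{\mathbf{G}}_n)^{-1}]_{ii}\}=1$, using the mean of a complex inverse Wishart matrix, $\mathbb{E}\{\mathbf{W}^{-1}\}=\mathbf{I}_{L_n}/(M_t-L_n)$. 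This is exactly why the normalization $\sqrt{M_t-L_n}$ appears in \eqref{eq:zf_precoder_def}, and it is the one step requiring a nontrivial random-matrix fact; the rest is bookkeeping. Summing over users with $\rho_{i,n}$ gives $\varepsilon_k\sum_i\rho_{i,n}\alpha_{i,n}$. For MR the corresponding sum carries $\xi_k$, because the estimate part also contributes. Here the $i=k$ term is the self-uncertainty variance and the $i\neq k$ terms are interference.

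\textbf{Sensing leakage.} The beamformer $\mathbf{w}$ is fixed and unit-norm, and independent of $\mathbf{g}_k[n]\sim\mathcal{CN}(\mathbf{0},\xi_k\mathbf{I})$. Therefore
\[
\mathbb{E}\{|\mathbf{g}_k^H[n]\mathbf{w}|^2\}q_n\rho^{\mathrm{sen}}_n=\xi_k\rho_n^{\mathrm{sen}}q_n.
\]
The ZF precoder does not null this term, since $\mathbf{w}$ is not among the ZF constraints. Together with $\sigma_{k,n}^2$, this yields \eqref{eq:sinr_zf_ofdma}.

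Two minor points to state explicitly. First, $\mathbf{e}_k[n]$ is independent of the estimates of all users: it depends only on user $k$'s pilot and channel under orthogonal pilots, so it is independent of $\bar{\mathbf{G}}_n$. Second, for $k\notin\mathcal{U}_n$ both sides vanish, since $\rho_{k,n}=0$.
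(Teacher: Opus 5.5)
Your proof is correct and follows the paper's own route. Both use the UatF decomposition and the ZF nulling $\widehat{\mathbf{g}}_k^H[n]\mathbf{t}_{\mathrm{c},i}[n]=0$, so that only the $\varepsilon_k$-weighted estimation-error terms remain, together with the $\xi_k\rho_n^{\mathrm{sen}}q_n$ leakage term from the fixed unit-norm $\mathbf{w}$. The only difference is that you derive $\mathbb{E}\{\|\mathbf{t}_{\mathrm{c},i}[n]\|_2^2\}=\alpha_{i,n}$ explicitly from the complex inverse-Wishart mean $\mathbb{E}\{(\bar{\mathbf{G}}_n^H\bar{\mathbf{G}}_n)^{-1}\}=\mathbf{I}_{L_n}/(M_t-L_n)$, whereas the paper just cites these expectations as standard ZF properties from the literature.
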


\begin{proof}
Please refer to Appendix~\ref{app:proof_zf_uatf}.
\end{proof}
Since a CU may be assigned multiple subcarriers, the achievable downlink rate
of CU $k$, accounting for the uplink channel-estimation overhead, is given by
\begin{equation}
R_k
=
\zeta_{\mathrm{dl}}
\Delta f
\sum_{n=0}^{N_c-1}
\rho_{k,n}
\log_2\!\left(
1+\mathrm{SINR}^{\mathrm{eff}}_{k,n}
\right),
\label{eq:CU_rate}
\end{equation}
where $\zeta_{\mathrm{dl}}$ is defined in \eqref{eq:dl_fraction}, and
$\mathrm{SINR}^{\mathrm{eff}}_{k,n}$ is given by
\eqref{eq:sinr_mr_ofdma} for MR precoding or by \eqref{eq:sinr_zf_ofdma} for
ZF precoding.

\subsection{Sensing Performance Metrics}
\label{subsec:sensing_metric}

Let $(\phi_{\mathrm{t}},\theta_{\mathrm{t}})$ denote the estimated reference
direction of the extended target. A coarse angular estimate is assumed to be
available from an initial radar search or tracking stage. Since the target
direction changes negligibly over the coherent processing interval, it is
treated as known and fixed during ISAR imaging. The
normalized sensing beampattern gain toward the target is defined as
\begin{equation}
p_0
\triangleq
\left|
\mathbf{a}_{\mathrm{t}}^{H}(\phi_{\mathrm{t}},\theta_{\mathrm{t}})
\mathbf{w}
\right|^2,
\qquad
0\leq p_0\leq1.
\label{eq:p0_def}
\end{equation}
To suppress undesired illumination, a sidelobe-level constraint can be imposed
as
\begin{equation}
\left|
\mathbf{a}_{\mathrm{t}}^{H}(\phi,\theta)\mathbf{w}
\right|^2
\leq
\gamma_{\mathrm{SL}}p_0,
\qquad
(\phi,\theta)\in\mathcal{G}_{\mathrm{SL}},
\label{eq:sll_constraint}
\end{equation}
where $\mathcal{G}_{\mathrm{SL}}$ denotes the sidelobe angular region and
$\gamma_{\mathrm{SL}}$ is the maximum allowable sidelobe-to-mainlobe ratio.

Since the sensing waveform is transmitted only over the active sensing
subcarriers, the per-subcarrier sensing SNR in the target direction is
\begin{equation}
\mathrm{SNR}^{\mathrm{sen}}_{n}
=
\frac{
\rho_n^{\mathrm{sen}}q_n
\left|
\mathbf{a}_t^{H}(\phi_{\mathrm{t}},\theta_{\mathrm{t}})
\mathbf{w}
\right|^2
}{
\sigma_{\mathrm{rad}}^2
}
=
\frac{
\rho_n^{\mathrm{sen}}q_n p_0
}{
\sigma_{\mathrm{rad}}^2
},
\quad \forall n\in\mathcal{N}_c,
\label{eq:snr_sen_rb}
\end{equation}
where $\sigma_{\mathrm{rad}}^2$ denotes the sensing receiver noise variance on
each subcarrier and $\mathcal{N}_c
\triangleq
\{0,\ldots,N_c-1\}$. Hence, $\mathrm{SNR}^{\mathrm{sen}}_{n}=0$ for inactive
sensing subcarriers with $\rho_n^{\mathrm{sen}}=0$.

For a given angular direction $(\phi,\theta)$, the sensing-only spatial power
over the active sensing subcarriers is
\begin{align}
P_{\mathrm{sen}}(\phi,\theta)
&\triangleq
\mathbb{E}
\left[
\sum_{n=0}^{N_c-1}
\left|
\mathbf{a}_{\mathrm{t}}^H(\phi,\theta)
\sqrt{\rho_n^{\mathrm{sen}}q_n}
\mathbf{w}
s_{\ell,n}^{\mathrm{sen}}
\right|^2
\right]
\nonumber\\
&=
\sum_{n=0}^{N_c-1}
\rho_n^{\mathrm{sen}}q_n
\left|
\mathbf{a}_{\mathrm{t}}^H(\phi,\theta)\mathbf{w}
\right|^2.
\label{eq:Psen_ofdm}
\end{align}
In particular, the sensing power toward the target direction is
\begin{equation}
P_{\mathrm{sen}}(\phi_{\mathrm{t}},\theta_{\mathrm{t}})
=
p_0
\sum_{n=0}^{N_c-1}
\rho_n^{\mathrm{sen}}q_n.
\label{eq:Psen_target}
\end{equation}

The simultaneous downlink communication transmission produces additional
spatial leakage in the sensing direction. Since only scheduled CUs transmit on
subcarrier $n$, the average communication-induced distortion power on
subcarrier $n$ is defined as
\begin{equation}
P_{\mathrm{com}}^{\mathrm{ave}}(\phi,\theta;n)
\triangleq
\mathbb{E}
\left[
\left\|
\mathbf{D}_n
\mathbf{T}_{\mathrm{c}}^{H}[n]
\mathbf{a}_{\mathrm{t}}(\phi,\theta)
\right\|_2^2
\right],
\label{eq:Pcom_ave_def}
\end{equation}
where the expectation is taken over the small-scale fading embedded in the
channel estimates used to construct $\mathbf{T}_{\mathrm{c}}[n]$.

For MR and ZF precoding under i.i.d. Rayleigh channel estimates and a
unit-norm steering vector $\mathbf{a}_{\mathrm{t}}(\phi,\theta)$, the average
distortion power satisfies \cite{Liao2024mMIMOISAC}
\begin{equation}
P_{\mathrm{com}}^{\mathrm{ave}}(\phi,\theta;n)
=
\frac{1}{M_t}
\sum_{k=1}^{K}
\rho_{k,n}\alpha_{k,n},
\label{eq:Pcom_M_scaling}
\end{equation}
which is independent of $(\phi,\theta)$ due to isotropy.

Since sensing observations are collected only on the active sensing
subcarriers, the total communication-induced distortion over the sensing band
is defined as
\begin{align}
P_{\mathrm{com}}^{\mathrm{tot}}
&\triangleq
\sum_{n=0}^{N_c-1}
\rho_n^{\mathrm{sen}}
P_{\mathrm{com}}^{\mathrm{ave}}(\phi,\theta;n)
\nonumber\\
&=
\frac{1}{M_t}
\sum_{n=0}^{N_c-1}
\rho_n^{\mathrm{sen}}
\sum_{k=1}^{K}
\rho_{k,n}\alpha_{k,n}.
\label{eq:Pcom_ave_avg}
\end{align}
Using \eqref{eq:Psen_target} and \eqref{eq:Pcom_ave_avg}, the
MASR is defined as
\begin{align}
\mathrm{MASR}
&\triangleq
\frac{
P_{\mathrm{sen}}(\phi_{\mathrm{t}},\theta_{\mathrm{t}})
}{
P_{\mathrm{com}}^{\mathrm{tot}}
}
\nonumber\\
&=
\frac{
M_t p_0
\sum_{n=0}^{N_c-1}\rho_n^{\mathrm{sen}}q_n
}{
\sum_{n=0}^{N_c-1}
\rho_n^{\mathrm{sen}}
\sum_{k=1}^{K}
\rho_{k,n}\alpha_{k,n}
}.
\label{eq:masr_M_final}
\end{align}
Therefore, imposing $\mathrm{MASR}\geq\kappa$, where $\kappa$ is the required
MASR threshold, is equivalent to
\begin{equation}
M_t p_0
\sum_{n=0}^{N_c-1}
\rho_n^{\mathrm{sen}}q_n
\geq
\kappa
\sum_{n=0}^{N_c-1}
\rho_n^{\mathrm{sen}}
\sum_{k=1}^{K}
\rho_{k,n}\alpha_{k,n}.
\label{eq:masr_constr_M}
\end{equation}

\subsection{Problem Formulation}\label{sec:problem}

We aim to maximize the communication sum-rate while satisfying the sensing and
communication requirements of the proposed OFDM-based ISAR-ISAC system. The
constraints include:   i) a prescribed MASR threshold to control
communication-induced sensing distortion, ii) CU QoS requirements, iii) the
required number of active sensing subcarriers, and iv) the total BS
transmit-power constraint.

The sensing beamformer $\mathbf{w}$ is designed offline to generate the desired
ISAR illumination pattern, e.g., a narrow mainlobe with a prescribed sidelobe
level, and is kept fixed during resource allocation. 
The joint communication-sensing resource-allocation problem is formulated as
\begin{subequations}
\label{P1}
\allowdisplaybreaks
    \begin{align}
\max_{\boldsymbol{\rho},\,\boldsymbol{\rho}^{\mathrm{sen}},
\,\boldsymbol{\alpha},\,\mathbf{q}}
\quad
& \sum_{k=1}^{K} R_k
\label{eq:prob_obj_s}\\
\mathrm{s.t.}\quad
&
R_k
\geq
R_k^{\min},
\quad
\forall k,
\label{eq:prob_qos_s}\\
&
\sum_{n=0}^{N_c-1}
\rho_n^{\mathrm{sen}}
=
\widetilde{N}_c,
\label{eq:prob_sensing_cardinality}\\
&
\rho_{k,n}\in\{0,1\},
\quad
\rho_n^{\mathrm{sen}}\in\{0,1\},
\quad
\forall k,n,
\label{eq:prob_binary_s}\\
&
\alpha_{k,n}\geq0,
\quad
q_n\geq0,
\quad
\forall k,n .
\\& \eqref{eq:total_power_constraint}, \eqref{eq:masr_constr_M}
\label{eq:prob_vars_s}
\end{align}
\end{subequations}
Here, $\boldsymbol{\rho}=[\rho_{k,n}]$,
$\boldsymbol{\rho}^{\mathrm{sen}}=[\rho_n^{\mathrm{sen}}]$,
$\boldsymbol{\alpha}=[\alpha_{k,n}]$, and
$\mathbf{q}=[q_0,\ldots,q_{N_c-1}]^T$. The parameter $R_k^{\min}$ denotes the
minimum QoS requirement of CU $k$. Constraint
\eqref{eq:prob_sensing_cardinality} specifies the number of active sensing
subcarriers required for ISAR imaging. In the full-band sensing mode,
$\widetilde{N}_c=N_c$ and $\rho_n^{\mathrm{sen}}=1$ for all
$n=0,\ldots,N_c-1$. The optimized sensing-selection vector $\boldsymbol{\rho}^{\mathrm{sen}}$
determines the active sensing-subcarrier set
$\Omega_{\mathrm{sen}}=\{n:\rho_n^{\mathrm{sen}}=1\}$. The sampling matrix
$\boldsymbol{\Psi}_r$ used in \eqref{eq:2d_sparse_isar_model} is then formed
by selecting the rows of the $N_c\times N_c$ identity matrix corresponding to
the indices in $\Omega_{\mathrm{sen}}$. It is worth noting that the communication and sensing subcarrier-selection
variables are not mutually exclusive. Hence, a subcarrier can be used
simultaneously for CU data transmission and sensing waveform transmission.
However, activating sensing on subcarrier $n$ introduces the sensing power
$q_n$ in the CU SINR denominator, while the communication power on the same
subcarrier contributes to the MASR constraint. Therefore, the proposed
resource-allocation formulation jointly determines the active sensing
subcarriers and the CU allocations according to the communication--sensing
tradeoff.

\subsection{Analytical Full-Band MR/ZF Benchmark}
\label{subsec:fullband_benchmark}

To provide an analytical reference for the proposed learning-based resource
allocation, we consider the full-band communication and sensing case under both
MR and ZF precoding. In this case, all CUs are multiplexed over all subcarriers,
and the sensing waveform occupies the whole OFDM bandwidth, i.e.,
$\rho_{k,n}=1$ and $\rho_n^{\mathrm{sen}}=1$, $\forall k,n$. For analytical
tractability, we consider uniform full-band power loading, $\alpha_{k,n}=\frac{\bar{\alpha}_k}{N_c}$, $q_n=\frac{p_s}{N_c}$, $\forall k,n$
where $\bar{\alpha}_k$ is the total communication power allocated to CU $k$,
and $p_s$ is the total sensing power. Let
$S_c=\sum_{k=1}^{K}\bar{\alpha}_k$. Then, using
\eqref{eq:sinr_mr_ofdma} and \eqref{eq:sinr_zf_ofdma}, the full-band SINRs for
MR and ZF are respectively given by
\begin{equation}
\mathrm{SINR}_{k}^{\mathrm{MR}}
=
\frac{
M_t\eta_k\bar{\alpha}_k
}{
\xi_k(S_c+p_s)+\sigma_k^2
},
\qquad \forall k ,
\label{eq:sinr_fullband_mr}
\end{equation}
and
\begin{equation}
\mathrm{SINR}_{k}^{\mathrm{ZF}}
=
\frac{
(M_t-K)\eta_k\bar{\alpha}_k
}{
\varepsilon_k S_c+\xi_k p_s+\sigma_k^2
},
\qquad \forall k ,
\label{eq:sinr_fullband_zf}
\end{equation}
where $M_t>K$ is required for full-band ZF precoding. Accordingly, for
$v\in\{\mathrm{MR},\mathrm{ZF}\}$, the rate of CU $k$ is
\begin{equation}
R_k^{v}
=
\zeta_{\mathrm{dl}}B_{\mathrm{sys}}
\log_2
\left(
1+\mathrm{SINR}_{k}^{v}
\right).
\end{equation}
The corresponding full-band benchmark is formulated as
\begin{subequations}
\label{eq:prob_fullband_benchmark}
\allowdisplaybreaks
\begin{align}
\max_{\{\bar{\alpha}_k\},p_s}
\quad
&
\sum_{k=1}^{K}R_k^{v}
\\
\mathrm{s.t.}\quad
&
R_k^{v}\geq R_k^{\min},
\qquad \forall k,
\\
&
M_t p_0 p_s \geq \kappa S_c,
\label{eq:fullband_masr}
\\
&
S_c+p_s\leq P_{\max},
\label{eq:fullband_power}
\\
&
\bar{\alpha}_k\geq0,\quad p_s\geq0,
\qquad \forall k .
\end{align}
\end{subequations}
\begin{Proposition}
\label{prop:fullband_benchmark}
For the full-band benchmark in \eqref{eq:prob_fullband_benchmark}, the optimal
sensing power and total communication power are respectively given by
\begin{equation}
p_s^\star
=
\frac{\kappa}{M_t p_0+\kappa}P_{\max},
\label{eq:ps_star_general}
\end{equation}
and
\begin{equation}
S_c^\star
=
\frac{M_t p_0}{M_t p_0+\kappa}P_{\max}.
\label{eq:Sc_star_general}
\end{equation}
Define
\begin{equation}
\chi_k^{\mathrm{MR}}
=
\frac{
M_t\eta_k
}{
\xi_kP_{\max}+\sigma_k^2
},
\label{eq:chi_mr_fullband}
\end{equation}
and, for $M_t>K$,
\begin{equation}
\chi_k^{\mathrm{ZF}}
=
\frac{
(M_t-K)\eta_k
}{
\varepsilon_k S_c^\star+\xi_k p_s^\star+\sigma_k^2
}.
\label{eq:chi_zf_fullband}
\end{equation}
For $v\in\{\mathrm{MR},\mathrm{ZF}\}$, let $\gamma_k
=
2^{\frac{R_k^{\min}}
{\zeta_{\mathrm{dl}}B_{\mathrm{sys}}}}-1$ and $\bar{\alpha}_{k,v}^{\min}
=
\frac{\gamma_k}{\chi_k^{v}}$,
then the full-band benchmark with precoder $v$ is feasible if and only if
\begin{equation}
\sum_{k=1}^{K}
\bar{\alpha}_{k,v}^{\min}
\leq
S_c^\star .
\label{eq:feasibility_fullband_b}
\end{equation}
If \eqref{eq:feasibility_fullband_b} holds, the optimal communication powers
are given by the QoS-aware water-filling solution
\begin{equation}
\bar{\alpha}_{k,v}^{\star}
=
\max
\left\{
\bar{\alpha}_{k,v}^{\min},
\nu_v^\star-\frac{1}{\chi_k^{v}}
\right\},
\qquad \forall k ,
\label{eq:alpha_star_waterfilling_b}
\end{equation}
where $\nu_v^\star$ is selected such that
$\sum_{k=1}^{K}
\bar{\alpha}_{k,v}^{\star}
=
S_c^\star$.
Finally, the per-subcarrier powers are $\alpha_{k,n}^{\star}
=
\frac{\bar{\alpha}_{k,v}^{\star}}{N_c}$ and $q_n^\star
=
\frac{p_s^\star}{N_c}$, $\forall k,n$.
\end{Proposition}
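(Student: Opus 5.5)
The plan is to reduce the benchmark in \eqref{eq:prob_fullband_benchmark} to a single-variable power split followed by a concave water-filling problem. I will use monotonicity in two stages, then apply KKT conditions to what remains.

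\emph{Step 1: for fixed communication powers, the sensing power is set at the MASR boundary.}
Fix $\{\bar{\alpha}_k\}$, and hence $S_c$. Both \eqref{eq:sinr_fullband_mr} and \eqref{eq:sinr_fullband_zf} are strictly decreasing in $p_s$, since $p_s$ appears only in the denominator with coefficient $\xi_k>0$. Constraint \eqref{eq:fullband_masr} gives the lower bound $p_s\geq \kappa S_c/(M_tp_0)$. Therefore, whenever a feasible point exists, moving $p_s$ down to this bound keeps the QoS constraints satisfied, does not decrease any rate, and only relaxes \eqref{eq:fullband_power}. After this substitution, the power constraint becomes $S_c(1+\kappa/(M_tp_0))\leq P_{\max}$, that is, $S_c\leq S_c^\star$ as in \eqref{eq:Sc_star_general}.

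\emph{Step 2: the total power constraint is tight.}
Scale all communication powers as $\bar{\alpha}_k\to t\bar{\alpha}_k$ with $t\geq1$, and keep $p_s=\kappa S_c/(M_tp_0)$ so that it scales by $t$ as well. Each SINR then has the form $t a_k/(t b_k+\sigma_k^2)$ with $a_k,b_k\geq0$, which is nondecreasing in $t$ and strictly increasing when $\sigma_k^2>0$. Hence every feasible point can be mapped to a feasible point with $S_c=S_c^\star$ and no smaller rates. This yields \eqref{eq:ps_star_general} and \eqref{eq:Sc_star_general}.

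The same scaling argument handles feasibility. Suppose any feasible point exists. Scaling it to $S_c=S_c^\star$ preserves all QoS constraints. At that operating point the denominators are fixed: for MR it is $\xi_kP_{\max}+\sigma_k^2$, because $S_c^\star+p_s^\star=P_{\max}$, and for ZF it is the denominator in \eqref{eq:chi_zf_fullband}. Hence $\mathrm{SINR}_k^v=\chi_k^v\bar{\alpha}_k$, and $R_k^v\geq R_k^{\min}$ is equivalent to $\bar{\alpha}_k\geq\bar{\alpha}^{\min}_{k,v}$. Summing these bounds gives \eqref{eq:feasibility_fullband_b}. Conversely, if \eqref{eq:feasibility_fullband_b} holds, one feasible point is obtained by setting each $\bar{\alpha}_k=\bar{\alpha}^{\min}_{k,v}$ and assigning the leftover communication power to any user.

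\emph{Step 3: the remaining problem is a water-filling problem.}
With $S_c=S_c^\star$ and $p_s=p_s^\star$ fixed, the benchmark reduces to
\[
\max_{\{\bar{\alpha}_k\}}\ \sum_k\log_2(1+\chi_k^v\bar{\alpha}_k)\quad\text{s.t.}\quad \bar{\alpha}_k\geq\bar{\alpha}^{\min}_{k,v},\ \ \sum_k\bar{\alpha}_k=S_c^\star.
\]
Equality in the sum constraint is without loss of generality because the objective is increasing in each $\bar{\alpha}_k$. This problem is concave with linear constraints, so the KKT conditions are sufficient.

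Stationarity gives $1/(1/\chi_k^v+\bar{\alpha}_k)=\nu'-\mu_k$, where $\mu_k\geq0$ is complementary slack to the lower bound. Eliminating $\mu_k$ yields \eqref{eq:alpha_star_waterfilling_b} with water level $\nu_v^\star=1/\nu'$. The function $\sum_k\max\{\bar{\alpha}^{\min}_{k,v},\nu-1/\chi_k^v\}$ is continuous and nondecreasing in $\nu$, and it is unbounded above. Under \eqref{eq:feasibility_fullband_b} its value at small $\nu$ is at most $S_c^\star$. Hence a water level $\nu_v^\star$ achieving $\sum_k\bar{\alpha}^\star_{k,v}=S_c^\star$ exists. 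The per-subcarrier powers $\alpha^\star_{k,n}=\bar{\alpha}^\star_{k,v}/N_c$ and $q^\star_n=p_s^\star/N_c$ then follow from the uniform-loading parameterization.

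\emph{Main obstacle.}
The hardest step is Step 2. The claim that tightening the power constraint does not hurt must hold jointly for MR and ZF, even though the interference terms grow with total power. The scaling argument is what resolves this: because $p_s$ is pinned proportionally to $S_c$, all denominators are affine in $t$ with a nonnegative noise term. The same argument is what makes the feasibility condition both necessary and sufficient, since it rules out a feasible point at smaller total power that fails to extend to $S_c^\star$.
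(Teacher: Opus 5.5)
Your proof is correct. Its skeleton matches the paper's: MASR tight, power budget tight, then a concave water-filling problem solved via KKT. The difference lies in how you establish the first two facts.

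\emph{The paper's route.} It proves MASR tightness by contradiction, using a constant-total-power exchange: move a small $\delta$ from $p_s$ to one $\bar{\alpha}_k$. For MR the denominators are unchanged. For ZF they change by $(\varepsilon_k-\xi_k)\delta\le 0$, so this step genuinely needs $\varepsilon_k\le\xi_k$. The paper then shows the budget is tight by scaling all powers by a common factor.

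\emph{Your route.} You first lower $p_s$ to $\kappa S_c/(M_tp_0)$ with the $\bar{\alpha}_k$ held fixed, which uses only that every SINR is decreasing in $p_s$. You then scale along the ray $p_s=\kappa S_c/(M_tp_0)$, which uses only homogeneity of the interference terms together with $\sigma_k^2>0$. This argument works for any nonnegative affine denominators, however the $S_c$- and $p_s$-coefficients compare, so it is slightly more general than the exchange argument.

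\emph{Feasibility.} Your route also gives a cleaner feasibility proof. The monotone map from an arbitrary feasible point to $(S_c^\star,p_s^\star)$, with no smaller rates, proves the ``only if'' direction explicitly. The paper only asserts that the condition ``follows directly''. That assertion implicitly relies on existence of an optimum (by compactness) combined with its activeness argument.

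\emph{What the contradiction form gives.} The paper's form yields necessity for free: every optimum satisfies $(S_c,p_s)=(S_c^\star,p_s^\star)$. Your ``no smaller rates'' phrasing only shows that \emph{some} optimum lies there. To claim these are \emph{the} optimal values, invoke the strict monotonicity you already state: both of your improvements are strict as soon as some $\bar{\alpha}_k>0$, so no optimum can sit off that operating point.
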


\begin{proof}
Please refer to Appendix~\ref{app:proof_fullband_benchmark}.
\end{proof}
The optimal power allocation follows a truncated water-filling structure,
where $\nu_v^\star$  can be
efficiently determined via bisection since
\begin{equation}
f_v(\nu)
=
\sum_{k=1}^{K}
\max\left\{
\bar{\alpha}_{k,v}^{\min},
\nu-\frac{1}{\chi_k^v}
\right\}
\end{equation}
is a continuous and nondecreasing function of $\nu$. To avoid a tolerance-dependent numerical search, we
use a finite active-set sorting procedure. Specifically, if
$\sum_{k=1}^{K}\bar{\alpha}_{k,v}^{\min}=S_c^\star$, then
$\bar{\alpha}_{k,v}^{\star}=\bar{\alpha}_{k,v}^{\min}$ for all $k$. Otherwise,
define
\begin{equation}
d_{k,v}
=
\bar{\alpha}_{k,v}^{\min}
+
\frac{1}{\chi_k^{v}},
\qquad \forall k .
\label{eq:d_k_waterlevel_b}
\end{equation}
Let $\pi_v(\cdot)$ be a permutation such that
\begin{equation}
d_{\pi_v(1),v}
\leq
d_{\pi_v(2),v}
\leq
\cdots
\leq
d_{\pi_v(K),v}.
\end{equation}
For a candidate active-set size $m$, define
\begin{equation}
\nu_{m,v}
=
\frac{
S_c^\star
-
\sum_{j=m+1}^{K}
\bar{\alpha}_{\pi_v(j),v}^{\min}
+
\sum_{j=1}^{m}
\frac{1}{\chi_{\pi_v(j)}^{v}}
}{m}.
\label{eq:nu_m_waterlevel_b}
\end{equation}
The optimal active-set size $m_v^\star$ is found as an integer satisfying
\begin{equation}
d_{\pi_b(m_v^\star),v}
\leq
\nu_{m_v^\star,v}
\leq
d_{\pi_b(m_v^\star+1),v},
\label{eq:mstar_waterlevel_b}
\end{equation}
where $d_{\pi_v(K+1),v}=\infty$. The optimal water level is then
$\nu_v^\star=\nu_{m_v^\star,v}$.
The complete procedure is summarized in Algorithm~\ref{alg:fullband_waterfilling}.

\begin{algorithm}[t]
\caption{Full-Band MR/ZF Analytical Benchmark}
\label{alg:fullband_waterfilling}

\KwIn{Precoder $v\in\{\mathrm{MR},\mathrm{ZF}\}$, $K$, $N_c$, $B_{\mathrm{sys}}$,$\zeta_{\mathrm{dl}}$, $P_{\max}$, $M_t$, $p_0$, $\kappa$, and $\{\eta_k,\xi_k,\varepsilon_k,\sigma_k^2,R_k^{\min}\}_{k=1}^{K}$}

\KwOut{$\{\alpha_{k,n}^{\star}\}$ and $\{q_n^\star\}$}

Compute $p_s^\star$ and $S_c^\star$ from
\eqref{eq:ps_star_general} and \eqref{eq:Sc_star_general}\;

Compute $\chi_k^v$ from \eqref{eq:chi_mr_fullband} for MR or
\eqref{eq:chi_zf_fullband} for ZF\;

Compute $\gamma_k$ and $\bar{\alpha}_{k,v}^{\min}$ for all $k$\;

\If{$\sum_{k=1}^{K}\bar{\alpha}_{k,v}^{\min}>S_c^\star$}{
    Declare the benchmark infeasible and terminate\;
}

Compute $\nu_v^\star$ using
\eqref{eq:d_k_waterlevel_b}--\eqref{eq:mstar_waterlevel_b}\;

Compute $\bar{\alpha}_{k,v}^{\star}$ from
\eqref{eq:alpha_star_waterfilling_b} for all $k$\;

Set $\alpha_{k,n}^{\star}=\bar{\alpha}_{k,v}^{\star}/N_c$ and
$q_n^\star=p_s^\star/N_c$ for all $k$ and $n$\;

\end{algorithm}

\section{DRL-Based Solution}
\label{sec:DRL}

In this section, we develop a DRL framework to
solve the sum-rate maximization problem \eqref{P1}  under  sparse-sensing
OFDM-ISAC. The joint optimization over communication-subcarrier
allocation, sensing-subcarrier selection, communication power control, and
dedicated sensing power allocation results in a mixed-integer nonconvex problem
that is computationally intractable for real-time implementation.

To address this challenge, we formulate the problem as a Markov decision
process (MDP) and develop a solution based on the SAC
algorithm. SAC is suitable because it can handle high-dimensional
continuous actions and encourages exploration via entropy regularization.
The continuous action generated by the actor is mapped into feasible binary
subcarrier-selection variables and nonnegative power-allocation variables
through deterministic projection operations.

\subsection{MDP Formulation}
\label{subsec:mdp}

The joint communication--sensing resource-allocation problem is formulated as a MDP defined by
\begin{equation}
\mathcal{M}
=
\left\langle
\mathcal{S},
\mathcal{A},
\mathcal{P},
\mathcal{R}
\right\rangle,
\end{equation}
where $\mathcal{S}$ denotes the state space, $\mathcal{A}$ denotes the action space, $\mathcal{P}$ represents the state-transition dynamics, and $\mathcal{R}$ denotes the reward function.

\subsubsection{State Space}

At time step $t$, the state contains the available statistical CSI and noise information. The state is defined as

\begin{equation}
s_t
=
\left[
\boldsymbol{\xi}(t),
\boldsymbol{\eta}(t),
\boldsymbol{\varepsilon}(t),
\boldsymbol{\sigma}(t)
\right],
\end{equation}
where $\boldsymbol{\xi}(t)$ denotes the large-scale fading coefficients,
$\boldsymbol{\eta}(t)$ denotes the channel-estimation gains,
$\boldsymbol{\varepsilon}(t)$ denotes the channel-estimation error variances,
and $\boldsymbol{\sigma}(t)$ denotes the noise-power vector.
Under perfect CSI (PCSI), $\boldsymbol{\varepsilon}(t)=\mathbf 0$,
$\boldsymbol{\eta}(t)=\boldsymbol{\xi}(t)$.

\subsubsection{Action Space}
The proposed framework employs the SAC algorithm with a stochastic Gaussian policy. Given the current state $s_t$, the actor network outputs the mean and standard deviation of a multivariate Gaussian distribution,
\begin{equation}
\pi_{\phi}(\mathbf{u}_t|s_t)
=
\mathcal{N}
\!\left(
\boldsymbol{\mu}_{\phi}(s_t),
\operatorname{diag}
\!\left(
\boldsymbol{\sigma}_{\phi}^{2}(s_t)
\right)
\right),
\end{equation}
where $\phi$ denotes the actor-network parameters.
Using the reparameterization trick, a latent action vector is generated as
\begin{equation}
\mathbf{u}_t
=
\boldsymbol{\mu}_{\phi}(s_t)
+
\boldsymbol{\sigma}_{\phi}(s_t)
\odot
\boldsymbol{\epsilon}_t,
\qquad
\boldsymbol{\epsilon}_t
\sim
\mathcal{N}(\mathbf{0},\mathbf{I}),
\end{equation}
where $\odot$ denotes element-wise multiplication.
To satisfy the bounded action space required by the environment, a component-wise hyperbolic tangent transformation is applied,
\begin{equation}
\mathbf{a}_t
=
\tanh(\mathbf{u}_t),
\end{equation}
which guarantees $a_i(t)\in[-1,1]$,
$\forall i$.
The resulting action vector is partitioned as

\begin{equation}
\mathbf{a}_t
=
\left[
\mathbf{z}_{\rho}(t),
\mathbf{z}_{\mathrm{sen}}(t),
\mathbf{z}_{\alpha}(t),
\mathbf{z}_{q}(t)
\right],
\end{equation}
where $\mathbf z_{\rho}(t)$ and $\mathbf z_{\mathrm{sen}}(t)$ contain the communication and sensing scheduling scores, respectively, whereas $\mathbf z_{\alpha}(t)$ and $\mathbf z_q(t)$ correspond to the communication and sensing power weights.

\subsubsection{Action Projection}
The continuous scheduling scores are converted into binary scheduling decisions.
For communication scheduling,
\begin{equation}
\rho_{k,n}(t)
=
\begin{cases}
1,
&
z_{\rho,k,n}(t)>0,
\\
0,
&
\text{otherwise}.
\end{cases}
\end{equation}
To avoid degenerate allocations, at least one communication subcarrier is assigned to each user. Whenever
$\sum_n\rho_{k,n}(t)=0$,
the subcarrier corresponding to the largest scheduling score is activated.

For sensing allocation, the $\widetilde{N}_c$ subcarriers associated with the largest sensing scores are selected. Let $\mathcal{I}_t$ denote the set of indices corresponding to the $\widetilde{N}_c$ largest entries of $\mathbf{z}_{\mathrm{sen}}(t)$. Then,
\begin{equation}
\rho_n^{\mathrm{sen}}(t)
=
\begin{cases}
1,
&
n\in\mathcal{I}_t,
\\
0,
&
\text{otherwise}.
\end{cases}
\end{equation}
which satisfies $\sum_{n=0}^{N_c-1}
\rho_n^{\mathrm{sen}}(t)
=
\widetilde{N}_c$.

The communication- and sensing-power outputs are first transformed into
nonnegative weights as
\begin{align}
w_{\alpha,k,n}(t)
&=
\frac{z_{\alpha,k,n}(t)+1}{2}
\,\rho_{k,n}(t),
\\
w_{q,n}(t)
&=
\left(
\epsilon_q+\frac{z_{q,n}(t)+1}{2}
\right)
\rho_n^{\mathrm{sen}}(t),
\end{align}
where $\epsilon_q>0$ is a small constant that ensures strictly positive
sensing power on every active sensing subcarrier.
Defining $W(t)
=
\sum_{k=1}^{K}
\sum_{n=0}^{N_c-1}
w_{\alpha,k,n}(t)
+
\sum_{n=0}^{N_c-1}
w_{q,n}(t)$,
the communication and sensing powers are obtained through normalization,
\begin{align}
    \alpha_{k,n}(t)
=&
\frac{P_{\max}}{W(t)}
\,w_{\alpha,k,n}(t),\\q_n(t)
=&
\frac{P_{\max}}{W(t)}
\,w_{q,n}(t).
\end{align}
Consequently, the generated action always satisfies the transmit-power constraint \eqref{eq:total_power_constraint}.
Unlike the total-power constraint, the MASR constraint is not enforced through action projection. Instead, violations are incorporated into the reward function, allowing the SAC agent to learn the throughput–sensing tradeoff directly.

\subsubsection{Reward Function}

After executing action $a_t$, the environment computes the achievable user rates $\{R_k(t)\}_{k=1}^{K}$. The immediate reward is defined as
\begin{equation}
\label{reward}
r(t)
=
\sum_{k=1}^{K}
R_k(t)
-
\lambda_{\mathrm{masr}}
\Delta_{\mathrm{masr}}(t)
-
\lambda_{\mathrm{qos}}
\Delta_{\mathrm{qos}}(t),
\end{equation}
where $\Delta_{\mathrm{masr}}(t)
=
\max
\!\left(
0,
\kappa A_{\mathrm{s}}(t)
-
M_t p_0 Q(t)
\right)$,
with
\begin{align}
Q(t)
&=
\sum_{n=0}^{N_c-1}
\rho_n^{\mathrm{sen}}(t)q_n(t),
\\
A_{\mathrm{s}}(t)
&=
\sum_{n=0}^{N_c-1}
\rho_n^{\mathrm{sen}}(t)
\sum_{k=1}^{K}
\rho_{k,n}(t)\alpha_{k,n}(t).
\end{align}
and $\Delta_{\mathrm{qos}}(t)
=
\max_{k}
\!\left(
0,
R_k^{\min}
-
R_k(t)
\right)$.
The first term in \eqref{reward} encourages high communication throughput, while the penalty terms guide the learned policy toward satisfying both the sensing-performance requirement and the user QoS constraints.

\subsection{SAC Learning}

The proposed framework employs the SAC algorithm to learn a stochastic policy $\pi_{\phi}(a_t|s_t)$ that maximizes the expected discounted reward while promoting exploration through entropy regularization. The policy is optimized using the objective
\begin{equation}
J_{\pi}
=
\mathbb E_{s_t\sim\mathcal D,\,a_t\sim\pi_{\phi}}
\left[
\alpha_{\mathrm{ent}}
\log \pi_{\phi}(a_t|s_t)
-
Q_{\min}(s_t,a_t)
\right],
\end{equation}
where $\mathcal D$ denotes the replay buffer and
$\alpha_{\mathrm{ent}}$ is the temperature parameter controlling the exploration level. To mitigate overestimation bias, SAC employs two critic networks $Q_{\psi_1}(s,a)$ and $Q_{\psi_2}(s,a)$, and uses
\begin{equation}
Q_{\min}(s,a)
=
\min
\left\{
Q_{\psi_1}(s,a),
Q_{\psi_2}(s,a)
\right\}
\end{equation}
to evaluate the policy objective. The corresponding target critic networks are updated using soft updates.

During training, actions are sampled from the Gaussian policy to encourage exploration. During evaluation, the mean action $\boldsymbol{\mu}_{\phi}(s)$ is employed to obtain deterministic and reproducible resource-allocation decisions. The overall training procedure is summarized in Algorithm~\ref{alg:SAC_ISAC}.

\begin{algorithm}[t]
\caption{SAC-Based Joint Communication--Sensing Resource Allocation}
\label{alg:SAC_ISAC}

\KwIn{$\gamma,\lambda_Q,\lambda_\pi,\lambda_\alpha,\bar{\mathcal H},\tau,B$}

\KwOut{Trained policy $\pi_\phi(a|s)$}

Initialize replay buffer $\mathcal D$, actor $\pi_\phi$, critics
$Q_{\psi_1},Q_{\psi_2}$, target critics
$Q_{\bar\psi_1},Q_{\bar\psi_2}$ with
$\bar\psi_i\leftarrow\psi_i$, and entropy temperature
$\alpha_{\mathrm{ent}}$\;

\For{each episode}{
Generate a channel realization and observe $s_0$\;

\For{each interaction step $t$}{
Sample $a_t\sim\pi_\phi(\cdot|s_t)$ and apply action shaping to obtain
$\{\rho_{k,n}\}$, $\{\rho_n^{\mathrm{sen}}\}$,
$\{\alpha_{k,n}\}$, and $\{q_n\}$\;

Execute the resource-allocation action, obtain reward $r_t$ and next state $s_{t+1}$, and store $(s_t,a_t,r_t,s_{t+1})$ in $\mathcal D$\;

Sample a mini-batch of size $B$ from $\mathcal D$\;

Update critics:
$\psi_i\leftarrow\psi_i-\lambda_Q\nabla_{\psi_i}J_Q(\psi_i)$,
$i=1,2$\;

Update actor:
$\phi\leftarrow\phi-\lambda_\pi\nabla_\phi J_\pi(\phi)$\;

Update entropy temperature:
$\alpha_{\mathrm{ent}}
\leftarrow
\alpha_{\mathrm{ent}}
-\lambda_\alpha\nabla_{\alpha_{\mathrm{ent}}}
J(\alpha_{\mathrm{ent}})$\;

Soft-update target critics:
$\bar\psi_i\leftarrow
\tau\psi_i+(1-\tau)\bar\psi_i$,
$i=1,2$\;
}
}
\end{algorithm}

\section{Simulation Results}
\label{sec:sim}

In this section, we evaluate the performance of the proposed OFDM-ISAC
framework. We first present the ISAR imaging results obtained from a synthetic
extended target using the proposed OFDM-based sensing signal model. Then,
we present the results of the proposed power allocation design for the considered ISAC system.

\subsection{ISAR Simulation Results}

We evaluate the proposed sparse OFDM-ISAR reconstruction method using a
simulated extended target composed of multiple dominant scattering centers. The
carrier frequency is $f_c=3.5$ GHz, the system bandwidth is
$B_{\mathrm{sys}}=100$ MHz, the subcarrier spacing is $\Delta f=60$ kHz, and
the number of subcarriers is $N_c=1666$. The CP duration is
$T_g=2.3~\mu$s. The initial target range is $R_0=5$ km, the target angular
velocity is $\omega=0.14$ rad/s, and the sensing PRI is set to $4T_t$.

To emulate sparse ISAC sensing, incomplete measurements are considered in both
the slow-time and subcarrier dimensions. Each coherence block contains
$\tau_c=20$ OFDM symbols, among which $\tau_p=10$ symbols are reserved for
pilot transmission. Hence, only $50\%$ of the slow-time samples are available
for ISAR sensing. In addition, $50\%$ of the subcarriers are selected for
sensing, resulting in only $25\%$ of the full slow-time--subcarrier measurement
grid.

We compare the proposed adaptive 2D-ADMM method with the zero-filled 2D-FFT and
the conventional fixed-$\lambda$ 2D-ADMM. For the 2D-FFT benchmark, the missing
samples are filled with zeros before applying range and Doppler FFTs. For the
ADMM-based methods, reconstruction is performed directly from the compressed
measurements. Both ADMM methods are implemented using FFT-based forward and
adjoint operators to avoid explicitly forming dense partial Fourier matrices.

\begin{figure}[t]
	\centering
	\subfloat[Ground truth\label{fig:isar_gt}]{%
		\includegraphics[width=0.48\linewidth,clip]{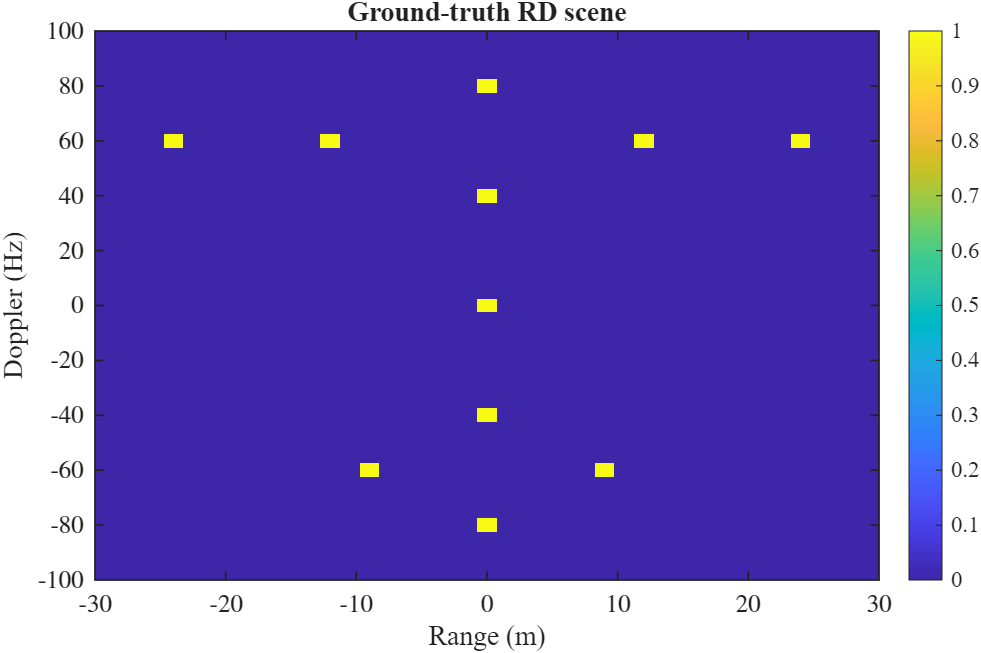}}
	\hfill
	\subfloat[2D-FFT\label{fig:isar_fft_full}]{%
		\includegraphics[width=0.48\linewidth,clip]{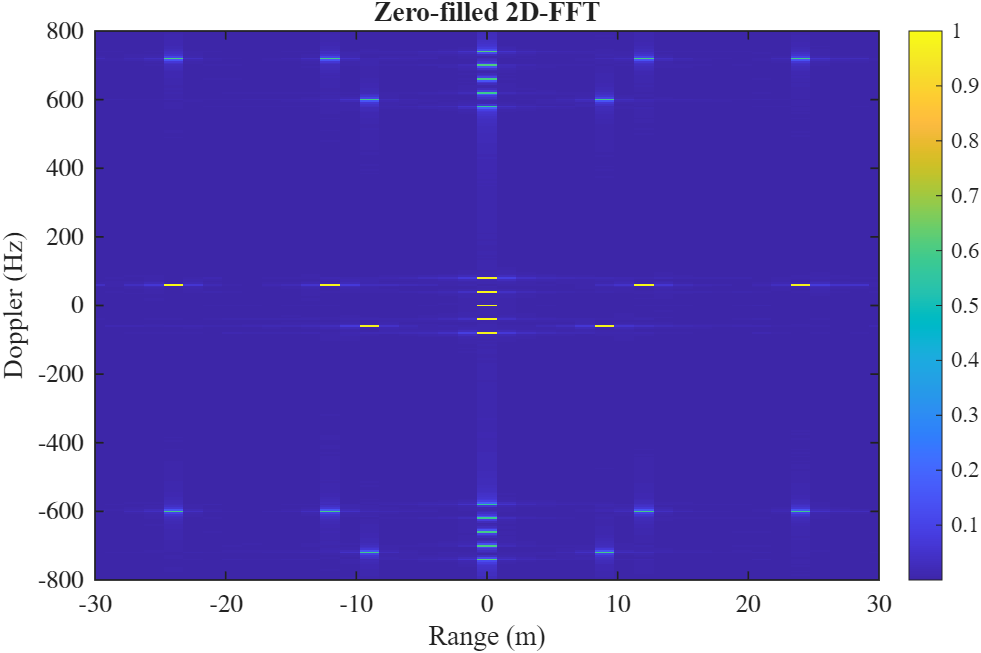}}

	\vspace{-2mm}

	\subfloat[Fixed-$\lambda$ 2D-ADMM\label{fig:isar_admm_full}]{%
		\includegraphics[width=0.48\linewidth,clip]{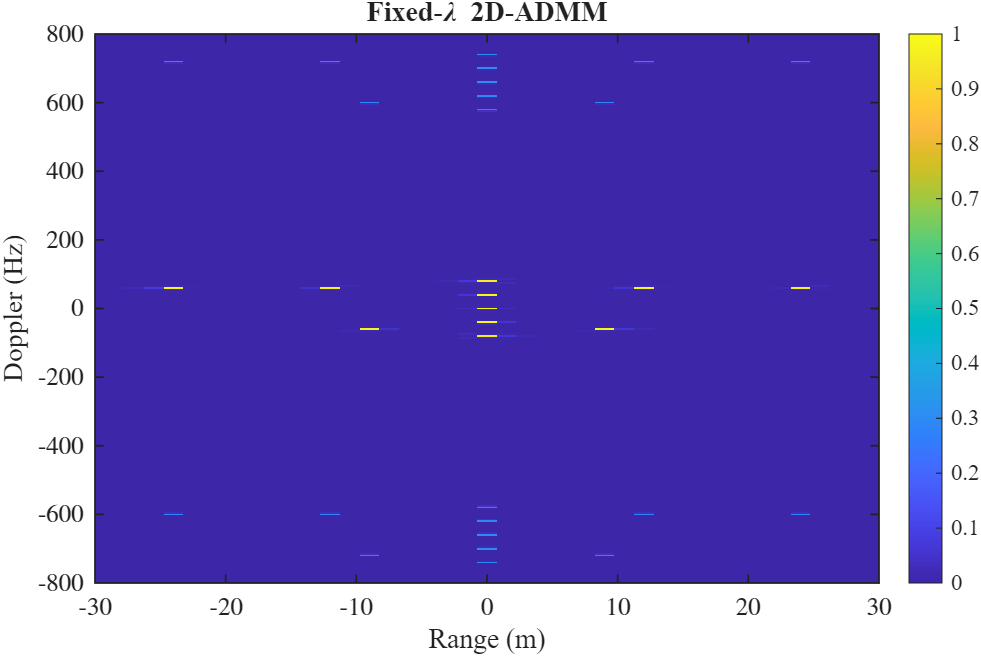}}
	\hfill
	\subfloat[Adaptive 2D-ADMM\label{fig:isar_adaptive_full}]{%
		\includegraphics[width=0.48\linewidth,clip]{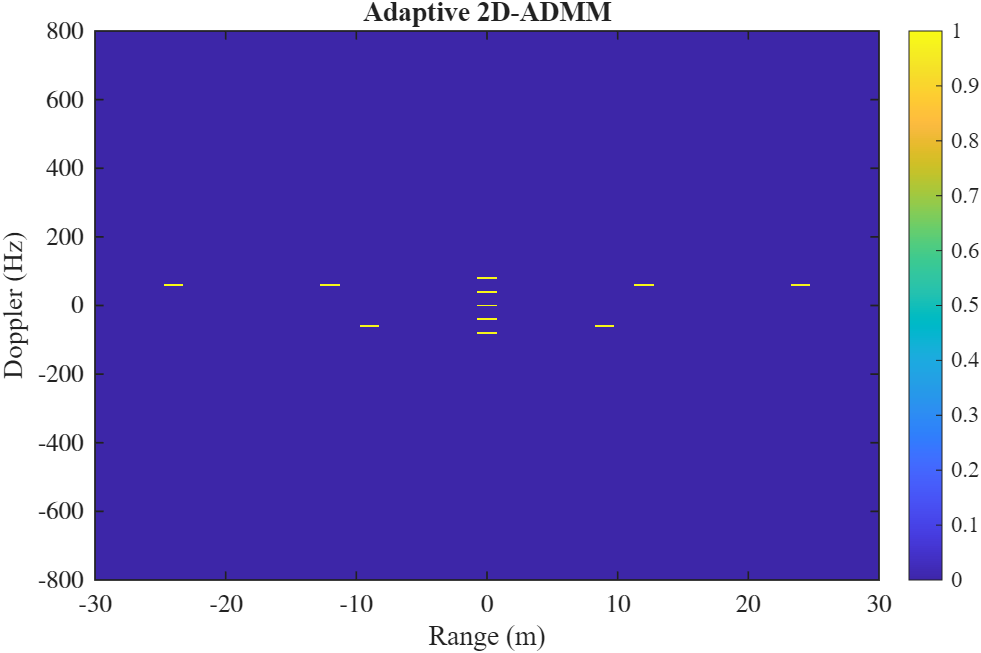}}

	\caption{Sparse ISAR reconstruction results using sparse aperture and sparse
	subcarrier measurements.}
	\label{fig:isar_sparse_reconstruction}
\end{figure}
Fig.~\ref{fig:isar_sparse_reconstruction} shows the ground-truth target scene and the reconstructed ISAR images from  sparse-measurement.
The 2D-FFT suffers from strong artifacts and false target replicas,
which are caused by aliasing and sidelobe effects due to sparse slow-time and
subcarrier sampling. The fixed-$\lambda$ 2D-ADMM suppresses part of these
artifacts by exploiting sparsity, but some residual ghost scatterers remain.
In contrast, the proposed adaptive 2D-ADMM provides a cleaner reconstruction,
preserving the dominant scattering centers while effectively suppressing the
sampling-induced artifacts.

Since the ground-truth target image is available, we use the normalized mean
square error (NMSE) as a quantitative metric. Let $\mathbf{G}_0$ and
$\widehat{\mathbf{G}}$ denote the ground-truth and reconstructed ISAR images on
the same range--Doppler grid, respectively. The NMSE is defined as
\begin{equation}
\mathrm{NMSE}
=
10\log_{10}
\left(
\frac{
\left\|
\widehat{\mathbf{G}}
-
\mathbf{G}_0
\right\|_F^2
}
{
\left\|
\mathbf{G}_0
\right\|_F^2
}
\right).
\label{eq:nmse_metric}
\end{equation}

Fig.~\ref{fig:nmse_snr} compares the NMSE versus SNR. The proposed adaptive
2D-ADMM achieves the lowest NMSE because it adjusts the sparsity threshold
according to the estimated background level and the dominant-scatterer gap. In
contrast, the fixed-$\lambda$ 2D-ADMM cannot adapt to different SNR levels. At
high SNR, the noise floor decreases and sampling-induced ghost components become
more visible; with a fixed threshold, some of them may be reconstructed as true
scatterers, increasing the NMSE. The adaptive method mitigates this problem by
jointly accounting for background noise and sidelobe-like artifacts.

\begin{figure}[t]
\centering
\includegraphics[width=\linewidth]{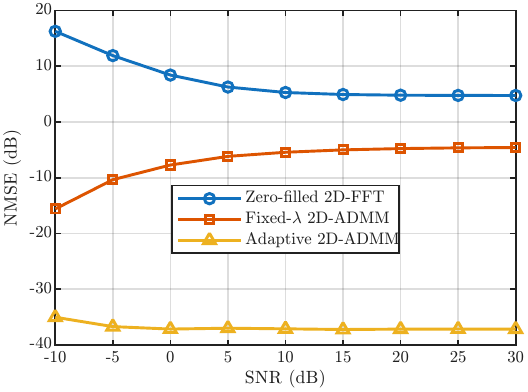}
\caption{NMSE versus SNR for sparse OFDM-ISAR reconstruction.}
\label{fig:nmse_snr}
\end{figure}

\subsection{ISAC Simulation Results}

\subsubsection{Simulation Setup}

We consider a single-cell OFDM-ISAC system with a square coverage area of
$500\,\mathrm{m}\times500\,\mathrm{m}$.
The $K$ communication users are uniformly distributed inside the cell while a
minimum BS--user distance ($d_k$) of $35$ m is enforced.
The large-scale fading coefficient of user $k$ is modeled as \cite{Liao2024mMIMOISAC}
\begin{equation}
[\xi_k]_{\mathrm{dB}}
=
-148.1
-
37.6\log_{10}\!\left(\frac{d_k}{1000}\right)
+
\varpi_k,
\end{equation}
where
$\varpi_k\sim\mathcal N(0,7^2)$ denotes log-normal shadow fading.
The noise power is set to $-96$ dBm.
Unless otherwise stated, PCSI is assumed.
Two solution methods are investigated.

\begin{enumerate}
\item The proposed analytical full-band benchmark (Algorithm~\ref{alg:fullband_waterfilling})
\item The proposed SAC-based sparse-sensing framework. (Algorithm~\ref{alg:SAC_ISAC})
\end{enumerate}

The sensing quality is controlled through the MASR constraint with threshold $\kappa$, while all users satisfy the same minimum rate requirement $R_k^{\min}=R^{\min}$.
The main system and training parameters are summarized in Table~\ref{tab:sim_params} unless otherwise specified.

\begin{table}[t]
\centering
\caption{Simulation Parameters}
\label{tab:sim_params}
\begin{tabularx}{\columnwidth}{l X c}
\hline
\textbf{Parameter} & \textbf{Value} \\
\hline
Cell size & $500\,\text{m} \times 500\,\text{m}$ \\
Minimum BS–user distance & $d_{\min}=35\,\text{m}$ \\
Number of antennas & $M=400$ \\
Number of subcarriers & $N_c=4$ \\
Transmit power budget & $P_{\max}=100\,\text{W}$ \\
Noise power & $-96\,\text{dBm}$ \\
Shadow fading std. & $\sigma_{\mathrm{sf}}=7\,\text{dB}$ \\
Pathloss constant & $-148.1\,\text{dB}$ \\
Pathloss slope & $37.6$ \\
Minimum rate requirement & $R^{\min}$ (varied) \\
MASR threshold & $\kappa=20\,\text{dB}$ \\
CSI mode & PCSI \\
Precoding schemes & MR, ZF \\
\hline
\multicolumn{2}{c}{\textit{SAC Training Parameters}} \\
\hline
Batch size & $128$ \\
Replay buffer size & $2\times 10^5$ \\
Discount factor & $\gamma=0.99$ \\
Soft update factor & $\tau=5\times10^{-3}$ \\
Actor learning rate & $10^{-4}$ \\
Critic learning rate & $10^{-4}$ \\
Entropy learning rate & $3\times10^{-4}$ \\
Training steps per episode & $2\times10^4$ \\
Penalty coefficients & $\lambda_{\{\mathrm{masr,qos}\}}=200$
 \\
\hline
\end{tabularx}
\end{table}

\subsubsection{Performance Evaluation} 
For presentation, we report the sum SE (SSE), defined as the
aggregate achievable downlink rate normalized by the system bandwidth $\mathrm{SSE}
\triangleq
\frac{1}{B_{\mathrm{sys}}}
\sum_{k=1}^{K} R_k$,
which is measured in bit/s/Hz.
Fig.~\ref{fig:wf_benchmark} shows the achievable SSE of the analytical
full-band benchmark under MR and ZF precoding. As expected, increasing the
transmit power improves the communication performance for both precoders.
For a given transmit power, larger MASR thresholds allocate a larger fraction
of the available power to sensing, thereby reducing the achievable SSE.
ZF consistently outperforms MR owing to its superior interference suppression
capability. However, the performance gap decreases as the MASR requirement
becomes more stringent, since the system becomes increasingly limited by the
sensing-power requirement rather than multiuser interference.

\begin{figure}[t]
\centering
\includegraphics[width=\linewidth]{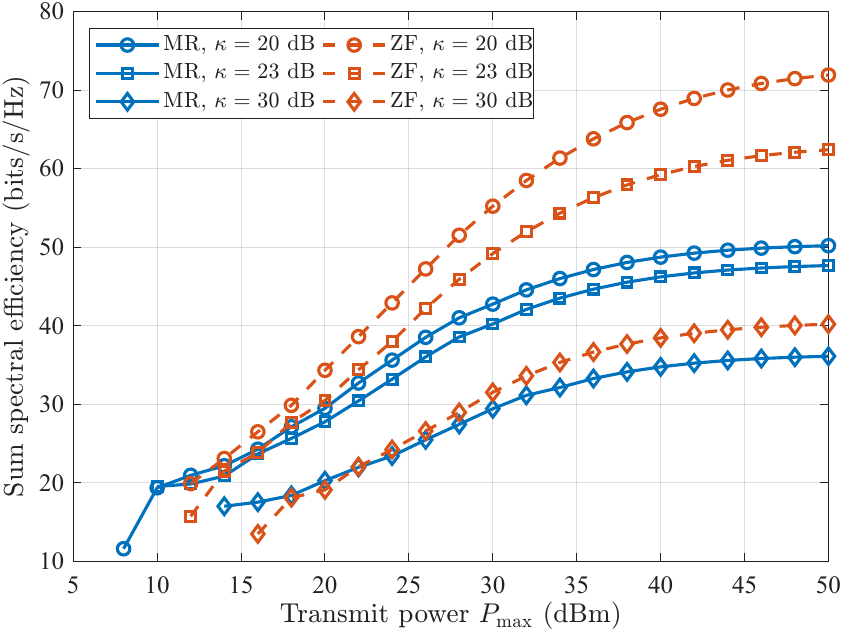}
\caption{SSE versus transmit power for the analytical full-band  benchmark.}
\label{fig:wf_benchmark}
\end{figure}
Fig.~\ref{fig:icsi_error} illustrates the impact of normalized channel
estimation error on the SSE for different sensing and QoS requirements.
The SSE gradually decreases with increasing channel estimation error.
For $\kappa=25$ dB, both precoders are relatively robust to CSI imperfections,
and ZF provides a clear gain over MR. Increasing the rate requirement from
$R^{\min}=0.5$ to $R^{\min}=4$ bits/s/Hz/user reduces the SSE because more
power must be reserved to satisfy the QoS constraints. When the sensing
requirement increases to $\kappa=35$ dB, the communication power budget is
further reduced, resulting in a noticeable performance loss. In particular,
the case $\kappa=35$ dB and $R^{\min}=4$ becomes infeasible for both MR and ZF.

\begin{figure}[t]
\centering
\includegraphics[width=\linewidth]{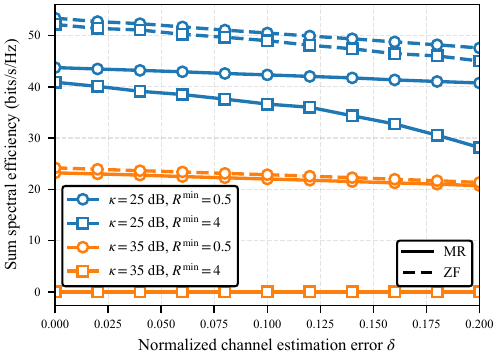}
\caption{SSE versus normalized channel estimation error $\delta$ for MR and ZF precoding under different sensing thresholds $\kappa$ and minimum-rate requirements $R^{\min}$}
\label{fig:icsi_error}
\end{figure}
Fig.~\ref{fig:sac_sparse_convergence} shows the training behavior of the
proposed SAC-based sparse-sensing framework. For both MR and ZF precoding,
the average reward converges after approximately $10^4$ interaction steps,
demonstrating stable learning. The learned policy consistently outperforms
the analytical full-band benchmark by jointly optimizing communication
scheduling, sensing-subcarrier selection, and power allocation. Moreover,
reducing the sensing fraction from $50\%$ to $25\%$
improves the SSE by leaving more sensing-free subcarriers available for communications,
while ZF maintains a performance advantage over MR.

\begin{figure}[t]
\centering
\includegraphics[width=\linewidth]{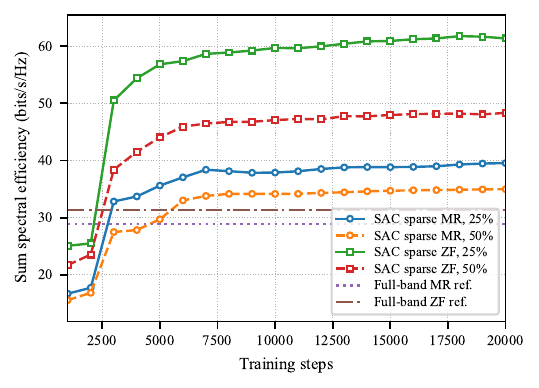}
\caption{Training convergence of the proposed SAC-based sparse-sensing framework for MR and ZF precoding with different sensing fractions. The analytical full-band solutions are shown as reference benchmarks.}
\label{fig:sac_sparse_convergence}
\end{figure}

\begin{figure}[t]
\centering
\includegraphics[width=1\linewidth]{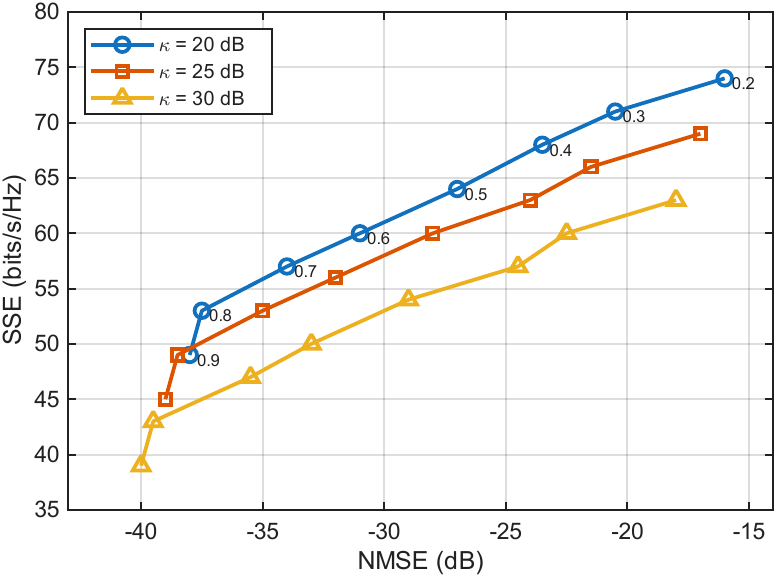}
\caption{SSE--NMSE tradeoff achieved by the proposed SAC-based resource allocation and adaptive ADMM reconstruction schemes under different MASR thresholds $\kappa$.}
\label{fig:SSE_vs_NMSE}
\end{figure}
Fig. \ref{fig:SSE_vs_NMSE} illustrates the tradeoff between SSE and ISAR image reconstruction accuracy, measured in terms of NMSE. The labels on the $\kappa=20$ dB curve indicate the sensing sampling ratio. As the sensing sampling ratio increases, the NMSE decreases, indicating improved reconstruction quality at the expense of reduced SSE. Moreover, increasing the MASR threshold $\kappa$ further improves the reconstruction accuracy; however, the additional NMSE gain achieved by increasing $\kappa$ from 20 dB to 30 dB is relatively small compared with the corresponding loss in SSE. Therefore, a moderate MASR threshold, e.g., $\kappa=20$ dB, provides a favorable balance between sensing performance and communication efficiency. It is worth noting that the MASR threshold also affects clutter and interference suppression in ISAR sensing. Although this aspect is not explicitly investigated in this work, excessively low MASR thresholds may compromise sensing robustness in practical deployments, making $\kappa=20$ dB a reasonable operating point.

\section{Conclusion}\label{sec:conc}

This paper proposed a unified mMIMO-OFDM ISAC framework that jointly addresses sparse ISAR imaging and sensing-communication resource allocation. An adaptive reweighted two-dimensional ADMM algorithm was developed to recover high-resolution ISAR images from sparse observations generated by intermittent pilot transmission and sparse sensing-subcarrier activation. Leveraging channel hardening and statistical CSI, analytical benchmark resource-allocation schemes were derived for MR and ZF precoding, and a SAC-based sparse-sensing framework was introduced to jointly optimize subcarrier allocation and transmit power. Simulation results demonstrated that the proposed adaptive ADMM approach substantially improves sparse ISAR reconstruction quality, while the SAC-based design achieves significant gains in SSE over conventional full-band sensing schemes. Moreover, the results revealed a clear tradeoff between ISAR reconstruction accuracy and communication SE, highlighting the potential of sparse sensing as an effective mechanism for improving communication performance in future mMIMO-ISAC systems.

\ 
\appendices
\numberwithin{equation}{section}

\section{Proof of Proposition~\ref{prop:mr_uatf}}
\label{app:proof_mr_uatf}
We fix an arbitrary subcarrier $n$ and omit the index $n$ in all
subcarrier-dependent variables for notational simplicity.
Using \eqref{eq:x_ln} and \eqref{eq:mr_precoder_def}, the received signal in \eqref{eq:cu_rx_signal} can be rewritten as
\begin{align}
y_{k,\ell}
&=
\mathbf{g}_k^H
\left(
\sum_{i=1}^{K}
\rho_i\mathbf{t}_{\mathrm{c},i}s_i[\ell]
+
\sqrt{q\rho^{\mathrm{sen}}}\,
\mathbf{w}s^{\mathrm{sen}}[\ell]
\right)
+
v_{k,\ell}
\nonumber\\
&=
A_k s_k[\ell]
+
\sum_{i\neq k}
B_{k,i}s_i[\ell]
+
C_k s^{\mathrm{sen}}[\ell]
+
v_{k,\ell},
\label{eq:mr_y_decomp}
\end{align}
where the effective coefficients are defined as
$A_k
\triangleq
\rho_k\mathbf{g}_k^H\mathbf{t}_{\mathrm{c},k}$,
$B_{k,i}
\triangleq
\rho_i\mathbf{g}_k^H\mathbf{t}_{\mathrm{c},i}\ (i\neq k)$ and
$C_k
\triangleq
\sqrt{q\rho^{\mathrm{sen}}}\,
\mathbf{g}_k^H\mathbf{w}$.
Applying the UatF bound \cite{Marzetta2016}, the SINR  is
\begin{align}
   & \mathrm{SINR}^{\mathrm{eff}}
=\nonumber\\&
\frac{|\mathbb{E}[A_k]|^2}
{\mathbb{E}[|A_k|^2]-|\mathbb{E}[A_k]|^2
+\sum_{i\neq k}\mathbb{E}[|B_{k,i}|^2]
+\mathbb{E}[|C_k|^2]
+\sigma_k^2 }.
\label{eq:uatf_general}
\end{align}
Next, by invoking $\|\mathbf{w}\|_2^2=1$, the channel decomposition
\eqref{eq:g_hat}--\eqref{eq:eps}, and the MR precoder in
\eqref{eq:mr_precoder_def}, we obtain
\begin{align}
\mathbb{E}[A_k]
&=\rho_k\sqrt{\frac{\alpha_k}{M_t}}\;\mathbb{E}\{\mathbf{g}_k^H\bar{\mathbf{g}}_k\}=\rho_k\sqrt{\frac{\alpha_k}{M_t}}\,
\mathbb{E}\{\widehat{\mathbf{g}}_k^H\bar{\mathbf{g}}_k\}\nonumber\\&
=\rho_k\sqrt{M_t\eta_k\alpha_k},
\label{eq:mr_EA}
\end{align}
where we used $\mathbb{E}[\|\bar{\mathbf{g}}_k\|_2^2]=M_t$ and $\mathbb{E}[\mathbf{e}_k^H\bar{\mathbf{g}}_k]=0$.
Moreover,
\begin{align}
\mathbb{E}[|A_k|^2]
&=\frac{\rho_k\alpha_k}{M_t}\,\mathbb{E}\!\left[|\mathbf{g}_k^H\bar{\mathbf{g}}_k|^2\right] \nonumber \\&
=\frac{\rho_k\alpha_k}{M_t}\Big(\mathbb{E}[|\widehat{\mathbf{g}}_k^H\bar{\mathbf{g}}_k|^2]+\mathbb{E}[|\mathbf{e}_k^H\bar{\mathbf{g}}_k|^2]\Big)\nonumber\\
&=\frac{\rho_k\alpha_k}{M_t}\Big(\eta_k\,\mathbb{E}[\|\bar{\mathbf{g}}_k\|_2^4]+\varepsilon_k\,\mathbb{E}[\|\bar{\mathbf{g}}_k\|_2^2]\Big)
\nonumber \\&=(M_t+1)\eta_k \rho_k\alpha_k+\varepsilon_k \rho_k\alpha_k
\nonumber \\&=M_t\eta_k \rho_k\alpha_k+\xi_k \rho_k\alpha_k,
\label{eq:mr_EA2}
\end{align}
where we used $\mathbb{E}[\|\bar{\mathbf{g}}_k\|_2^4]=M_t(M_t+1)$ and $\xi_k=\eta_k+\varepsilon_k$.
For $i\neq k$, since $\bar{\mathbf{g}}_i$ is independent of $\mathbf{g}_k$,
\begin{align}
\mathbb{E}[|B_{k,i}|^2]
&=\frac{\rho_i\alpha_i}{M_t}\,\mathbb{E}\!\left[|\mathbf{g}_k^H\bar{\mathbf{g}}_i|^2\right]
=\frac{\rho_i\alpha_i}{M_t}\,\mathbb{E}\!\left[\|\mathbf{g}_k\|_2^2\right]
=\xi_k\,\rho_i\alpha_i,
\label{eq:mr_EB2}
\end{align}
where $\mathbb{E}[\|\mathbf{g}_k\|_2^2]=M_t\xi_k$.
Finally, for the dedicated sensing term,
\begin{equation}
\mathbb{E}[|C_k|^2]
=q\rho^{\mathrm{sen}}\,\mathbb{E}\!\left[|\mathbf{g}_k^H\mathbf{w}|^2\right]
=q\,\xi_k\,\|\mathbf{w}\|_2^2
=q\,\xi_k.
\label{eq:mr_EC2}
\end{equation}
Substituting \eqref{eq:mr_EA}--\eqref{eq:mr_EC2} into \eqref{eq:uatf_general} 
 yields
\begin{equation}
\mathrm{SINR}^{\mathrm{eff,MR}}_{k}
=
\frac{
M_t\eta_k\rho_k\alpha_k
}{
\xi_k\sum_{i=1}^{K}\rho_i\alpha_i
+
\xi_k\rho^{\mathrm{sen}}q
+
\sigma_k^2
}.
\end{equation}
Restoring the subcarrier index $n$ gives \eqref{eq:sinr_mr_ofdma}.

\section{Proof of Proposition~\ref{prop:zf_uatf}}
\label{app:proof_zf_uatf}

For notational simplicity, we omit the subcarrier index $n$ in this proof.
Under the ZF precoder in \eqref{eq:zf_precoder_def}, the beamforming vectors are
constructed such that
\(
\widehat{\mathbf{g}}_k^H\mathbf{t}_{\mathrm{c},i}=0
\)
for all $i\in\mathcal{U}$ with $i\neq k$.
Hence, the multiuser interference originating from the estimated channels is
completely eliminated, and the residual interference arises solely from the
channel estimation error $\mathbf{e}_k$.

Proceeding as in \eqref{eq:mr_y_decomp}--\eqref{eq:uatf_general}, define
\begin{align}
\mu_k &\triangleq \mathbb{E}\!\left[\mathbf{g}_k^H\mathbf{t}_{\mathrm{c},k}\right],\\
\eta_{k,i} &\triangleq \mathbb{E}\!\left[\left|\mathbf{g}_k^H\mathbf{t}_{\mathrm{c},i}\right|^2\right],\\
\eta_{k}^{\mathrm{w}} &\triangleq \mathbb{E}\!\left[\left|\mathbf{g}_k^H\mathbf{w}\right|^2\right].
\end{align}

Then, the UatF effective SINR for CU $k$ is
\begin{equation}
\mathrm{SINR}_k^{\mathrm{eff}}
=
\frac{\rho_k|\mu_k|^2}
{\eta_{k,k}-|\mu_k|^2
+\sum_{i\in\mathcal{U},\,i\neq k}\rho_i\eta_{k,i}
+q\,\rho^{\mathrm{sen}}\eta_{k}^{\mathrm{w}}
+\sigma_k^2 }.
\end{equation}

Using standard properties of ZF precoding with i.i.d.\ Rayleigh fading channel
estimates \cite{Liao2024mMIMOISAC}, we obtain
\begin{equation}
\mathbb{E}[\mathbf{g}_k^H\mathbf{t}_{\mathrm{c},k}]
=
\sqrt{(M_t-L)\,\eta_k\,\alpha_k},
\end{equation}
and
\begin{equation}
\sum_{i\in\mathcal{U}}
\mathbb{E}[|\mathbf{e}_k^H\mathbf{t}_{\mathrm{c},i}|^2]
=
\varepsilon_k\sum_{i\in\mathcal{U}}\alpha_i,
\end{equation}
where $L\triangleq|\mathcal{U}|$ denotes the number of scheduled users on the
considered subcarrier.
Moreover, since $\|\mathbf{w}\|_2^2=1$, we have
\begin{equation}
\mathbb{E}
\left[
|\mathbf{g}_k^H\mathbf{w}|^2
\right]
=
\xi_k,
\end{equation}

Substituting the above expectations into the UatF SINR expression and restoring
the subcarrier index $n$ yields \eqref{eq:sinr_zf_ofdma}, which completes the
proof.

\section{Proof of Proposition~\ref{prop:fullband_benchmark}}
\label{app:proof_fullband_benchmark}

Suppose first that the MASR constraint \eqref{eq:fullband_masr} is inactive.
Then, for a fixed total transmit power $S_c+p_s$, one can reduce $p_s$ and
increase one of the communication powers $\bar{\alpha}_k$ by the same
sufficiently small amount while preserving feasibility. For MR precoding, the
SINR denominator in \eqref{eq:sinr_fullband_mr} remains unchanged because it
depends on $S_c+p_s$. For ZF precoding, the denominator in
\eqref{eq:sinr_fullband_zf} does not increase under the standard channel-error
decomposition $\xi_k=\eta_k+\varepsilon_k$, since $\varepsilon_k\leq \xi_k$.
Meanwhile, the numerator of the selected CU increases. Hence, the sum rate can
be increased, which contradicts optimality. Therefore, the MASR constraint must
be active at the optimum, i.e.,
\begin{equation}
M_t p_0 p_s^\star
=
\kappa S_c^\star .
\label{eq:active_masr_appendix}
\end{equation}

The total-power constraint \eqref{eq:fullband_power} must also be active.
Otherwise, all communication and sensing powers could be scaled up by a common
factor while preserving the MASR ratio. Since the receiver noise powers are
positive, i.e., $\sigma_k^2>0$, this scaling increases the SINRs and hence
increases the sum rate. Therefore,
\begin{equation}
S_c^\star+p_s^\star
=
P_{\max}.
\label{eq:active_power_appendix}
\end{equation}
Solving \eqref{eq:active_masr_appendix} and
\eqref{eq:active_power_appendix} gives \eqref{eq:ps_star_general} and
\eqref{eq:Sc_star_general}.

Substituting \eqref{eq:active_power_appendix} into
\eqref{eq:sinr_fullband_mr} gives
\begin{equation}
\mathrm{SINR}_{k}^{\mathrm{MR}}
=
\chi_k^{\mathrm{MR}}\bar{\alpha}_k ,
\end{equation}
where $\chi_k^{\mathrm{MR}}$ is defined in
\eqref{eq:chi_mr_fullband}. Similarly, substituting
\eqref{eq:ps_star_general} and \eqref{eq:Sc_star_general} into
\eqref{eq:sinr_fullband_zf} gives
\begin{equation}
\mathrm{SINR}_{k}^{\mathrm{ZF}}
=
\chi_k^{\mathrm{ZF}}\bar{\alpha}_k ,
\end{equation}
where $\chi_k^{\mathrm{ZF}}$ is defined in
\eqref{eq:chi_zf_fullband}. Hence, for
$v\in\{\mathrm{MR},\mathrm{ZF}\}$, the QoS constraint
$R_k^v\geq R_k^{\min}$ is equivalent to
\begin{equation}
\bar{\alpha}_k
\geq
\bar{\alpha}_{k,v}^{\min}
=
\frac{\gamma_k}{\chi_k^v},
\end{equation}
where
$\gamma_k=2^{R_k^{\min}/(\zeta_{\mathrm{dl}}B_{\mathrm{sys}})}-1$.

Therefore, after fixing the optimal communication--sensing power split, and
dropping the positive constant $B_{\mathrm{sys}}$, the remaining communication
power allocation problem for precoder $b$ becomes
\begin{subequations}
\label{eq:reduced_power_allocation_b}
\begin{align}
\max_{\{\bar{\alpha}_k\}}
\quad
&
\sum_{k=1}^{K}
\log_2(1+\chi_k^v\bar{\alpha}_k)
\\
\mathrm{s.t.}\quad
&
\sum_{k=1}^{K}\bar{\alpha}_k=S_c^\star,
\\
&
\bar{\alpha}_k\geq \bar{\alpha}_{k,v}^{\min},
\qquad \forall k .
\end{align}
\end{subequations}
The problem in \eqref{eq:reduced_power_allocation_b} is a concave maximization
problem with linear constraints. Thus, the KKT conditions are necessary and
sufficient for global optimality~\cite{boyd2004convex}. Applying the standard
water-filling argument~\cite{tse2005fundamentals}, the optimal communication
powers are given by
\begin{equation}
\bar{\alpha}_{k,v}^{\star}
=
\max
\left\{
\bar{\alpha}_{k,v}^{\min},
\nu_v^\star-\frac{1}{\chi_k^v}
\right\},
\qquad \forall k ,
\end{equation}
where $\nu_v^\star$ is selected such that
\begin{equation}
\sum_{k=1}^{K}
\bar{\alpha}_{k,v}^{\star}
=
S_c^\star .
\end{equation}
Finally, the feasibility condition follows directly from the requirement that
the available communication power $S_c^\star$ must be no smaller than the sum
of the minimum QoS powers, i.e.,
\begin{equation}
\sum_{k=1}^{K}
\bar{\alpha}_{k,v}^{\min}
\leq
S_c^\star .
\end{equation}
This completes the proof.

\bibliographystyle{ieeetr}
\bibliography{references}

\end{document}